\newcommand{\p}[1]{\mathbb{P}}
\newcommand{\norm}[2]{\Vert #1 \Vert_{#2}}
\newcommand{\fa}[1]{\forall}
\newcommand{\cN}{\mathcal{N}}
\newcommand{\la}{\lambda}
\newcommand{\Ga}{\Gamma}
\newcommand{\Si}{\Sigma}
\newcommand{\bP}{\mathbb{P}}
\newcommand{\cA}{\mathcal{A}}
\newcommand{\ti}{\times}
\newcommand{\si}{\sigma}
\newcommand{\R}{\mathbb{R}}
\newcommand{\sign}{\operatorname{sign}}
\newcommand{\Z}{\mathbb{Z}}
\newcommand{\C}{\mathbb{C}}
\theoremstyle{definition}
\newtheorem{defi}{Definition}[section]
\theoremstyle{remark}
\theoremstyle{plain}
\newtheorem{lemma}[defi]{Lemma}
\newtheorem{corollary}[defi]{Corollary}
\newtheorem{theorem}[defi]{Theorem}
\newtheorem{remark}[defi]{Remark}
\newcommand{\eps}{\varepsilon}
\newcommand{\del}{\delta}
\newcommand{\E}{\mathbb{E}}
\title[One-bit compressed sensing with partial Gaussian circulant matrices]{One-bit compressed sensing with partial Gaussian circulant matrices}
\author{Sjoerd Dirksen$^*$}
\thanks{$^*$ RWTH Aachen University, Lehrstuhl C f{\"u}r 
Mathematik (Analysis), Pontdriesch 10, 52062 Aachen,  Germany, 
lastname@mathc.rwth-aachen.de}
\author{Hans Christian Jung$^*$}
\author{Holger Rauhut$^*$} 
\date{October 9, 2017}
\begin{document}
\maketitle
\begin{abstract}
In this paper we consider memoryless one-bit compressed sensing with randomly subsampled Gaussian circulant matrices. We show that in a small sparsity regime and for small enough accuracy $\delta$, 
$m\sim \del^{-4} s\log(N/s\del)$
measurements suffice to reconstruct the direction of any $s$-sparse vector up to accuracy $\del$ via an efficient program. We derive this result by proving that partial Gaussian circulant matrices satisfy an $\ell_1/\ell_2$ RIP-property. Under a slightly worse dependence on $\del$, we establish stability with respect to approximate sparsity, as well as full vector recovery results.  
\end{abstract}

\section{Introduction}

In the past decade, compressed sensing has established itself as a new paradigm in signal processing. It predicts that one can reconstruct signals from a small number of linear measurements using efficient algorithms, by exploiting that many real-world signals possess a sparse representation. In the traditional compressed sensing literature, it is typically assumed that one can reconstruct a signal based on its analog linear measurements. In a realistic sensing scenario, measurements need to be quantized to a finite number of bits before they can be transmitted, stored, and processed. Formally, this means that one needs to reconstruct a sparse signal $x$ based on \emph{non-linear} measurements of the form $y=Q(Ax)$, where $Q:\R^m\rightarrow \cA^m$ is a quantizer and $\cA$ denotes a finite quantization alphabet.\par 
In this paper, we study the measurement model 
\begin{equation}
\label{eqn:1-bitModel}
y=\text{sign}(Ax + \tau),
\end{equation}
where $A\in\R^{m\ti N}$, $m\ll N$, $\text{sign}$ is the signum function applied element-wise and $\tau\in \R^m$ is a (possibly random) vector consisting of thresholds. Thus, every linear measurement is quantized to a single bit in a memoryless fashion, i.e., each measurement is quantized independently. This quantizer is attractive from a practical point of view, as it can be implemented using an energy-efficient comparator to a fixed voltage level (if $\tau_i=c$ for all $i$) combined with dithering (if $\tau$ is random). In the case $\tau=0$, this model was coined \emph{one-bit compressed sensing} by Boufounos and Baraniuk \cite{BoB08}. Taking all thresholds equal to zero has the disadvantage that the energy $\|x\|_2^2$ of the original signal is lost during quantization and one can only hope to recover the direction of the signal. More recent works \cite{KSW16,BFN17} have shown that by using random thresholds, it is under appropriate circumstances possible to completely reconstruct the signal (up to any prescribed precision).\par
Until now, recovery results for the one-bit compressed sensing model \eqref{eqn:1-bitModel} dealt almost exclusively with a Gaussian measurement matrix $A$. The only exception seems to be \cite{ALP14}, which deals with subgaussian matrices.  
The goal of this paper is to derive reconstruction guarantees in the case that $A$ is a randomly subsampled Gaussian circulant matrix. This compressed sensing model is important for several real-world applications, including SAR radar imaging, Fourier optical imaging and channel estimation (see e.g.\ \cite{Rom09} and the references therein). Our work seems to be the first to give rigorous reconstruction guarantees for memoryless one-bit compressed sensing involving a structured random matrix.\par 
Our results concern guarantees for uniform recovery under a \emph{small sparsity assumption}. Concretely, for a desired accuracy parameter $0 < \varepsilon \leq 1$, we assume that the sparsity $s$ is small enough, i.e.,
\[
s \lesssim \sqrt{ \varepsilon N/\log(N)}.
\]
we suppose that the (expected) number of measurements satisfy 
\begin{equation}\label{eqn:condmNint}
m \gtrsim \left\{\begin{array}{ll}\varepsilon^{-1} s \log(eN/(s\varepsilon)) & \mbox{ if } 0 < \varepsilon \leq \left(\log^2(s)\log(N)\right)^{-1}\\  
\varepsilon^{-1/2} s \log(s) \log^{3/2}(N) & \mbox{ if } \left(\log^2(s)\log(N)\right)^{-1} < \varepsilon \leq 1.
\end{array} \right.
\end{equation}
Let us first phrase our results for $\tau=0$. We consider two different recovery methods to reconstruct $x$, namely via a single hard thresholding step
\begin{equation}
\label{eqn:RHT}
\tag{HT}
x_{\text{HT}}^{\#}=H_s(A^*\sign(Ax))
\end{equation}
and via the program 
\begin{equation}
\label{eqn:LPPV}
\tag{LP}
\min_{z\in \R^n} \|z\|_1 \qquad \text{s.t.} \qquad \sign(Az)=\sign(Ax) \qquad \text{and} \qquad \|Az\|_1=1. 
\end{equation}
As the first constraint is equivalent to $(Az)_i \sign((Ax)_i)\geq 0$ for $i=1,\ldots,m$ and due to the second constraint it can be written as $\sum_{i=1}^n \sign((Ax)_i) (Az)_i = 1$, it follows that \eqref{eqn:LPPV} is a linear program.\par
Our first result shows that under \eqref{eqn:condmNint} the following holds with high probability: for any $s$-sparse $x$ with $\|x\|_2=1$ the hard thresholding reconstruction $x_{\text{HT}}^{\#}$ satisfies $\|x-x_{\text{HT}}^{\#}\|_2\leq \varepsilon^{1/4}$. Moreover, under slightly stronger conditions (see Theorem~\ref{thm:main2} with $\delta = \varepsilon^{1/4}$) for any vector satisfying $\|x\|_1\leq \sqrt{s}$ and $\|x\|_2$=1, any solution $x^{\#}_{\text{LP}}$ satisfies $\|x-x^{\#}_{\text{LP}}\|_2\leq \varepsilon^{1/8}$. As a consequence, we can reconstruct the direction $x/\|x\|_2$ of any $s$-sparse (resp.\ effectively sparse) signal via an efficient program.\par
Our second result gives guarantees for the full recovery of effectively sparse vectors, provided that an upper bound $R$ on their energy is known. We suppose that $\tau$ is a vector of independent, $\mathcal{N}(0,\pi R^2/2)$-distributed random variables. If a condition similar to \eqref{eqn:condmNint} is satisfied (see Theorem~\ref{thm:main2}), then the following holds with high probability: for any $x\in \R^N$ with $\|x\|_1\leq \sqrt{s}\|x\|_2$ and $\|x\|_2\leq R$, the solution $x^{\#}_{\text{CP}}$ to the second-order cone program 
\begin{equation}
\label{eqn:CPTh}
\tag{CP}
\min_{z \in \R^N} \|z\|_1 \qquad \text{s.t.} \qquad \text{sign}(Az+\tau) = \text{sign}(Ax+\tau), \  \|z\|_2\leq R
\end{equation}
satisfies $\|x-x_{\text{CP}}^{\#}\|_2\leq R\varepsilon^{1/8}$.\par
Our analysis relies on an observation of Foucart \cite{Fou17}, who showed that it is sufficient for the matrix $A$ to satisfy an \emph{$\ell_1/\ell_2$-RIP property} to guarantee successful uniform recovery via \eqref{eqn:RHT} and \eqref{eqn:LPPV}. In the same vein, we show that the program \eqref{eqn:CPTh} is guaranteed to succeed under an $\ell_1/\ell_2$-RIP property for a modification of $A$. We prove the required RIP-properties in Theorem~\ref{thm:mainNT}. The final section of the paper discusses two additional consequences of these RIP-results. In Corollary~\ref{cor:BPDN1} we follow the work \cite{DLR16} to derive a recovery guarantee for (unquantized) outlier-robust compressed sensing with Gaussian circulant matrices. In Theorem~\ref{thm:USC} we use a recent result from \cite{JaC16} to derive an improved guarantee for recovery from uniform scalar quantized Gaussian circulant measurements.  

\section{Related work}
\label{sec:relatedwork}

{\bf Standard compressive sensing with partial circulant matrices.}
In standard (unquantized) compressive sensing, the task is to recover an (approximately) sparse vector $x \in \R^N$ from measurements
$y = A x$, where $A \in \R^{m \times N}$ with $m \ll N$. A number of reconstruction algorithms have been introduced, most
notably $\ell_1$-minimization which computes the minimizer of 
\[
\min_{z \in \R^N} \|z\|_1 \quad \mbox{ subject to } A z = Ax.
\]
The ($\ell_2$-)restricted isometry property is a classical way of analyzing the performance of compressive sensing \cite{fora13}.
The restricted isometry constant $\delta_s$ is defined as the smallest constant $\delta$ such that
\begin{equation}\label{def:RIP2}
(1-\delta) \|x\|_2^2 \leq \|A x\|_2^2 \leq (1+ \delta)\|x\|_2^2 \quad \mbox{ for all } s\mbox{-sparse } x \in \R^N.
\end{equation}
If $\delta_{2s} < 1/\sqrt{2}$ then all $s$-sparse signals can be reconstructed via $\ell_1$-minimization exactly, see e.g.\ \cite{cazh14,fora13}. Stability under noise and sparsity defect can be shown as well and similar guarantees also hold for other reconstruction algorithms \cite{fora13}. It is well-known that Gaussian random matrices satisfy $\delta_s \leq \delta$ with probability at least $1-\eta$ if $m \gtrsim \delta^{-2} (s \log(eN/s) + \log(1/\eta))$ \cite[Chapter 9]{fora13}.

The situation that $A$ is a subsampled random circulant matrix (see below for a formal definition) has been analyzed in
several contributions \cite{Rom09,ra09,ra10,KMR14,jara15,MRW16}. The best available result states \cite{KMR14} 
that a properly normalized (deterministically) subsampled random
circulant matrix (generated by a Gaussian random vector) satisfies $\delta_s \leq \delta$ with probability at least
$1-\eta$ if
\[
m \gtrsim \delta^{-2} s (\log^2(s) \log^2(N) + \log(1/\eta)).
\]
The original contribution \cite{Rom09} by Romberg uses random subsampling of a circulant matrix and requires slightly more logarithmic factors, but is able to treat sparsity with respect to an arbitrary basis. In the case of randomly subsampled random convolutions and sparsity with respect to the standard basis, stable and robust $s$-sparse recovery via $\ell_1$-minimization could recently
be shown via the null space property \cite{fora13} in \cite{MRW16} in a small sparsity regime $s \lesssim \sqrt{N/\log(N)}$
under the optimal condition
\begin{equation}\label{m:optimal}
m \gtrsim s \log(eN/s).
\end{equation}
We note that the proof in \cite{MRW16} provides the lower RIP-bound in \eqref{def:RIP2} and may be
extended to show the lower $\ell_1/\ell_2$ RIP bound in \eqref{eqn:RIP12} below under condition \eqref{m:optimal}.

Non-uniform recovery results
have been shown in \cite{ra09,ra10,jara15} which require only $m \gtrsim s \log(N)$ measurements 
for exact recovery from (deterministically) subsampled random convolutions via $\ell_1$-minimization.\par

{\bf One-bit compressive sensing with Gaussian measurements, $\mathbf{\tau=0}$.} The majority of the known signal reconstruction results in one-bit compressed sensing are restricted to standard Gaussian measurement matrices. Let us first consider the results in the case $\tau=0$. It was shown in \cite[Theorem 2]{JLB13} that if $A$ is $m\times N$ Gaussian and $m\gtrsim \del^{-1}s\log(N/\del)$ then, with high probability, any $s$-sparse $x,x'$ with $\|x\|_2=\|x'\|_2=1$ and 
$\sign(Ax)=\sign(Ax')$ satisfy $\|x-x'\|_2\leq \del$. In particular, this shows that one can approximate 
$x$ up to error $\del$ by the solution of the non-convex program
$$\min\|z\|_0 \qquad \text{s.t.} \qquad \sign(Ax)=\sign(Az), \ \|z\|_2=1.$$
This result is near-optimal in the following sense: any reconstruction $x^{\#}$ based on $\sign(Ax)$ satisfies $\|x^{\#}-x\|_2\gtrsim s/(m+s^{3/2})$ \cite[Theorem 1]{JLB13}. That is, the dependence of $m$ on $\del$ can in general not be improved. It was shown in \cite[Theorem 7]{GNJ13} that this optimal error dependence can be obtained using a polynomial time algorithm if the measurement matrix is modified. Specifically, they showed that if $m\gtrsim \del^{-1} m' \log(m'/\del)$ if $A=A_2A_1$, where $A_2$ is $m\times m'$ Gaussian and $A_1$ is any $m'\times N$ matrix with RIP constant bounded by $1/6$ (so one can take $m'\sim s\log(N/s)$ if $A_1$ is Gaussian), then with high probability one can recover any $s$-sparse $x$ with unit norm up to error $\del$ from $\sign(Ax)$ using an efficient algorithm. To recover efficiently from Gaussian one-bit measurements, Plan and Vershynin \cite{PlV13lin} proposed the reconstruction program in \eqref{eqn:LPPV}. They showed that using $m \gtrsim \del^{-1} s\log^2(N/s)$ Gaussian measurements one can recover every $x$ with $\|x\|_1\leq\sqrt{s}$ and $\|x\|_2=1$ via \eqref{eqn:LPPV} with reconstruction error $\del^{1/5}$. In \cite{PlV13} they introduced a different convex program and showed that if $m\gtrsim \del^{-1}s\log(N/s)$, then one can achieve a reconstruction error $\del^{1/6}$ even if there is (adversarial) quantization noise present.\par
{\bf Thresholds.} It was recently shown that one can recover full signals (instead of just their directions) by incorporating appropriate thresholds. In \cite{KSW16} it was shown that by taking Gaussian thresholds $\tau_i$ one can recover energy information by slightly modifying the linear program \eqref{eqn:LPPV}. A similar observation for recovery using the program \eqref{eqn:CPTh} was made in \cite{BFN17}. The paper \cite{KSW16} also proposed a method to estimate $\|x\|_2$ using a single deterministic threshold $\tau_i=\tau$ that works well if one has some prior knowledge of the energy range. \par 
{\bf Subgaussian measurements.} The results described above are all restricted to Gaussian measurements. It seems that \cite{ALP14} is currently the only work on memoryless one-bit compressed sensing for non-Gaussian matrices. Even though one-bit compressed sensing can fail in general for subgaussian matrices, it is shown in \cite{ALP14} that some non-uniform recovery results from \cite{PlV13} can be extended to subgaussian matrices if the signal to be recovered is not too sparse (meaning that $\|x\|_{\infty}$ is small) or if the measurement vectors are close to Gaussian in terms of the total variation distance.\par 
{\bf Uniform scalar quantization.}  Some recovery results for circulant matrices are essentially known for a different memoryless quantization scheme. Consider the uniform scalar quantizer $Q_{\del}:\R^m\rightarrow (\del\Z + \del/2)^m$ defined by $Q_{\del}(z) = \big(\del\lfloor z_i/\del\rfloor + \del/2\big)_{i=1}^m$. As we point out in Appendix \ref{app:scalar-quant}, if $A$ consists of $m\gtrsim s\log^2 s\log^2 N$ deterministic samples of a subgaussian circulant matrix, then it follows from \cite{KMR14} that with high probability one can recover any $s$-sparse vector up to a reconstruction error $\del$ from its quantized measurements $Q_{\del}(Ax)$. By using random subsampling and imposing a small sparsity assumption similar to ours, this number of measurements can be decreased to $m\gtrsim s\log(N/s)$ \cite{MRW16}. In these results, the recovery error does not improve beyond the resolution $\del$ of the quantizer, even if one takes more measurements. In Theorem~\ref{thm:USC} we show that for a randomly subsampled Gaussian circulant matrix it is possible to achieve a reconstruction error decay beyond the quantization resolution, provided that one introduces an appropriate dithering in the quantizer.\par
{\bf Adaptive quantization methods.} The results discussed above all concern \emph{memoryless} quantization schemes, meaning that each measurement is quantized independently. By quantizing adaptively based on previous measurements, one can improve the reconstruction error decay rate. In \cite{BFN17} it was shown for Gaussian measurement matrices that by using adaptive thresholds, one can even achieve an (optimal) exponential decay in terms of the number of measurements. Very recently, it was shown that one can efficiently recover a signal from randomly subsampled subgaussian partial circulant measurements that have been quantized using a popular scheme called sigma-delta quantization \cite{FKS17}. In particular, \cite[Theorem 5]{FKS17} proves that based on $m\sim s\log^2 s\log^2 N$ one-bit sigma-delta quantized measurements, one can use a convex program to find an approximant of the signal that exhibits polynomial reconstruction error decay. Although adaptive methods such as sigma-delta quantization can achieve a better error decay than memoryless quantization schemes, they require a more complicated hardware architecture and higher energy consumption in operation than the memoryless quantizer studied in this work.

\section{Notation}
\label{sec:notation}

We use $\text{Id}_N$ to denote the $N\ti N$ identity matrix. If $A\in \R^{m\ti N}$ and $B\in \R^{m\ti M}$, then $[A \ B] \in \R^{m\ti (N+M)}$ is the matrix obtained by concatenating $A$ and $B$. For $x\in \R^N$ and $y\in \R^M$ let $[x,y]\in \R^{N+M}$ be vector obtained by appending $y$ at the end of $x$. If $I\subset[N]$, then $x_I\in \R^{|I|}$ is the vector obtained by restricting $x$ to its entries in $I$. We let $R_I:\R^N\to \R^{|I|}$, $R_I(x)=x_I$ denote the restriction operator.
Further, $\|x\|_p$ denotes the usual $\ell_p$-norm of a vector $x$, $\|A\|_{\ell_2 \to \ell_2}$ the spectral norm of a matrix $A$ and $\|A\|_F$ its Frobenius norm. For an event $E$, $1_E$ is the characteristic function of $E$.
\par  
We let $\Si_{s,N}$ denote the set of all $s$-sparse vectors with unit norm. We say that $x\in \R^N$ is \emph{$s$-effectively sparse} if $\|x\|_1\leq \sqrt{s}\|x\|_2$. We let $\Sigma_{s,N}^\text{eff}$ denote the set of all $s$-effectively sparse vectors. Clearly, if $x$ is $s$-sparse, then it is $s$-effectively sparse. We let $H_s$ denote the hard thresholding operator, which sets all coefficients of a vector except the $s$ largest ones (in absolute value) to $0$.\par 
For any $x\in \R^N$ we let $\Ga_x \in \R^{N\ti N}$ and $D_x \in \R^N$ be the circulant matrix and diagonal matrix, respectively, generated by $x$. That is,
$$D_x = \begin{bmatrix}
x_1 & 0 & \cdots & 0 & 0\\
0 & x_1 & 0 & & 0\\
0 & 0 & x_2 & \ddots & \vdots\\
\vdots & &  \ddots & \ddots & 0\\
0 & \cdots & & 0 & x_{N}
\end{bmatrix},\ \Gamma_x=
\begin{bmatrix}
x_{N} & x_{1} & x_2 & \cdots & x_{N-2} & x_{N-1}\\
x_{N-1} & x_N & x_1 &  \cdots & x_{N-3} & x_{N-2}\\
x_{N-2} & x_{N-1} & x_N & \cdots & x_{N-4} & x_{N-3}\\
\vdots & \vdots & \vdots&  \vdots & \vdots & \vdots\\
x_1 & x_2 & x_3 & \cdots & x_{N-1} & x_N
\end{bmatrix} \; .
$$ 
We study the following linear measurement matrix. We consider a vector $\theta$ of i.i.d.\ random selectors with mean $m/N$ and let $I=\{i\in [N] \ : \ \theta_i=1\}$. Let $g$ be an $N$-dimensional standard Gaussian vector that is independent of $\theta$. We define the \emph{randomly subsampled Gaussian circulant matrix} by $A=R_I\Gamma_g$. Note that $\E|I|=m$, so $m$ corresponds to the expected number of measurements in this model.

\section{Recovery via RIP$_{1,2}$-properties}

Let us start by stating our main recovery result for vectors with small sparsity located on the unit sphere.
\begin{theorem}\label{thm:main1}
\label{thm:dirRecov}
Let $0 < \delta, \eta \leq 1$ and $s \in [N]$ 
such that
\begin{equation}\label{s:cond}
s \lesssim \min\{\sqrt{ \delta^2 N/\log(N)}, \delta^2 N/\log(1/\eta)\}.
\end{equation}
If $0 < \delta \leq 
\big( \log(s) \sqrt{\log(N)}\big)^{-1}$
suppose that
\begin{equation}
\label{eqn:condSp:small}
m \gtrsim \del^{-2} s \log(eN/(s\del \eta)). 
\end{equation}
If $\big( \log(s) \sqrt{\log(N)}\big)^{-1} < \delta \leq 1$
suppose that
\begin{equation}\label{eqn:condSp:large}
m \gtrsim \delta^{-1} s \max\left\{ \log(s) \log^{3/2}(N),  \frac{\log(1/\eta)}{\log(s) \sqrt{\log(N)}}, \frac{\log(1/\eta) \log(s) \sqrt{\log(N)}}{s}\right\}.
\end{equation}
Let $A=R_I\Gamma_g$. Then, with probability at least $1-\eta$, for every $x\in \R^N$ with $\|x\|_0\leq s$ and $\|x\|_2=1$,  the hard thresholding reconstruction $x_{\text{HT}}^{\#}$ satisfies $\|x-x_{\text{HT}}^{\#}\|_2\lesssim \sqrt{\del}$. 
\end{theorem}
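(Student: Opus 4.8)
The plan is to derive the theorem from an $\ell_1/\ell_2$-restricted isometry property of the (appropriately normalized) measurement matrix, via a deterministic reduction due to Foucart~\cite{Fou17}. Concretely, set $\mu:=\sqrt{2/\pi}$ and $B:=\tfrac1{\mu m}R_I\Gamma_g$. I would first establish --- this is the content of Theorem~\ref{thm:mainNT} --- that, under \eqref{s:cond} and \eqref{eqn:condSp:small}--\eqref{eqn:condSp:large}, with probability at least $1-\eta$,
\[
(1-c\delta)\|z\|_2\;\le\;\|Bz\|_1\;\le\;(1+c\delta)\|z\|_2\qquad\text{for every }2s\text{-sparse }z,
\]
for a suitable absolute constant $c$. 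On this event I would then show that the hard thresholding reconstruction $x_{\text{HT}}^{\#}=H_s(B^{*}\sign(Bx))$ of \eqref{eqn:RHT} satisfies $\|x_{\text{HT}}^{\#}-x\|_2\lesssim\sqrt\delta$ for every $s$-sparse unit vector $x$, which is the assertion. The passage from $\delta$ to $\sqrt\delta$ happens entirely in this second, deterministic step.

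That step runs as follows. Fix $s$-sparse $x$ with $\|x\|_2=1$, write $S:=\operatorname{supp}(x)$ and $u:=B^{*}\sign(Bx)$. Then $\langle u,x\rangle=\langle\sign(Bx),Bx\rangle=\|Bx\|_1\ge 1-c\delta$, and testing the upper RIP bound against $z=u_T/\|u_T\|_2$ gives $\|u_T\|_2\le 1+c\delta$ for every $|T|\le 2s$. Consequently $\|u_S\|_2\ge\langle u,x\rangle\ge 1-c\delta$, so
\[
\|u_S-\langle u,x\rangle x\|_2^2=\|u_S\|_2^2-\langle u,x\rangle^2\le(1+c\delta)^2-(1-c\delta)^2\lesssim\delta,
\]
whence $\|u_S-x\|_2\lesssim\sqrt\delta$; taking $T$ to be $S$ together with the $s$ largest coordinates of $u$ outside $S$ shows likewise that the $\ell_2$-mass of $u$ on those coordinates is $\lesssim\sqrt\delta$. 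A routine comparison of the coordinates in $S$ that $H_s(u)$ misses with the coordinates outside $S$ that it spuriously selects --- there are at least as many of the latter, and each is no larger in modulus --- then yields $\|H_s(u)-u_S\|_2\lesssim\sqrt\delta$, and the triangle inequality gives $\|H_s(u)-x\|_2\lesssim\sqrt\delta$.

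The substance lies in proving the RIP, i.e.\ Theorem~\ref{thm:mainNT}. Using $\Gamma_g x=\Gamma_x g$ and setting $w_x:=(|(\Gamma_g x)_i|)_{i=1}^{N}$, so that $\|R_I\Gamma_g x\|_1=\langle\theta,w_x\rangle$, I would split, for $x\in\Sigma_{2s,N}$,
\[
\Bigl|\tfrac1{\mu m}\|R_I\Gamma_g x\|_1-1\Bigr|\;\le\;\underbrace{\tfrac1{\mu m}\bigl|\langle\theta-\E\theta,\,w_x\rangle\bigr|}_{E_1(x)}\;+\;\underbrace{\Bigl|\tfrac1{\mu N}\|\Gamma_g x\|_1-1\Bigr|}_{E_2(x)}.
\]
For $\sup_x E_2(x)$: for fixed $x$ the map $g\mapsto\tfrac1N\|\Gamma_x g\|_1$ has expectation $\mu$ and Lipschitz constant $N^{-1/2}\|\Gamma_x\|_{\ell_2\to\ell_2}\le N^{-1/2}\|x\|_1\le\sqrt{2s/N}$, so Gaussian concentration plus generic chaining over $\Sigma_{2s,N}$ (metric entropy $\lesssim s\log(eN/(s\rho))$ at scale $\rho$) bounds $\sup_x E_2(x)$ by $\lesssim\sqrt{s^2\log(eN/s)/N}+\sqrt{s\log(1/\eta)/N}$, which is $\lesssim\delta$ exactly by the small-sparsity hypothesis \eqref{s:cond}; this mirrors the argument of \cite{MRW16} with the $\ell_1$-functional in place of $\ell_2$. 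For $\sup_x E_1(x)$: conditionally on $g$ this is the supremum of a \emph{linear} --- hence Bernoulli --- process in the selectors $\theta$, indexed by $\{w_x:x\in\Sigma_{2s,N}\}\subset\R^N$; on the high-probability event (again granted by \eqref{s:cond}) that $\|\Gamma_g x\|_2\asymp\sqrt N$ uniformly and the entries $(\Gamma_g x)_i$, which are inner products of $g$ with $2s$-sparse unit vectors, obey the requisite $\ell_\infty$-type bound, Dudley's inequality together with Talagrand's deviation inequality for bounded empirical processes controls it by $\lesssim\sqrt{s\,\mathrm{polylog}(N)/m}$ plus a $\log(1/\eta)$ deviation term. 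Requiring $\sup_x(E_1(x)+E_2(x))\lesssim\delta$ and recording whether the binding constraint on $m$ is $\delta^{-2}$ or $\delta^{-1}\cdot\mathrm{polylog}(N)$ produces the two regimes \eqref{eqn:condSp:small}--\eqref{eqn:condSp:large}, the $\log(1/\eta)$ terms in \eqref{eqn:condSp:large} coming from balancing the deviation estimates against the metric-entropy bounds.

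I expect the $E_1$ term to be the main obstacle: getting it with only a single logarithmic factor in the regime \eqref{eqn:condSp:small} is exactly what the theorem requires, whereas a black-box appeal to the known $\ell_2$-RIP bounds for partial random circulant matrices~\cite{KMR14} would cost $\log^2(s)\log^2(N)$. Shedding that overhead forces one to chain the circulant process directly, exploiting both the per-vector flatness of $\Gamma_g x$ and the $s$-sparse geometry of the index set. The remaining ingredients --- the Foucart reduction and the $E_2$ estimate --- are either elementary or follow established templates.
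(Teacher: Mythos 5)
Your reduction to an $\ell_1/\ell_2$-RIP is exactly the paper's strategy (Lemma~\ref{lem:FouHT} plus Theorem~\ref{thm:mainNT}), and your deterministic hard-thresholding argument is essentially a correct reproof of Foucart's lemma. The gap is in the probabilistic core, which you only sketch: your plan for the subsampling term $E_1$ does not actually produce the measurement conditions of the theorem. First, you assert that Dudley plus Talagrand concentration for the Bernoulli selector process, chained directly over $\{w_x : x\in\Sigma_{2s,N}\}$, yields $\sup_x E_1(x)\lesssim \sqrt{s\,\mathrm{polylog}(N)/m}$ with only a single logarithm in the small-$\delta$ regime; you yourself flag this as ``the main obstacle'' and offer no argument for it. The paper never performs such a chaining: it works over a $\delta/(1+\kappa)$-net of $\Sigma_{s,N}$, applies Hoeffding's inequality conditionally on $g$ with variance proxy $\tfrac1m\|Ay\|_2^2\leq(1+\kappa)^2$ (supplied by the \emph{upper} $\ell_2$-RIP bound of \cite{KMR14}, Theorem~\ref{thm:RIP2upper}), handles the net-to-set passage through the same event, and deals with the conditional centering via symmetrization and the median--mean comparison of Lemma~\ref{lem:basic_facts}. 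The single logarithm in \eqref{eqn:condSp:small} is then obtained not by avoiding the $\log^2(s)\log^2(N)$ cost of \cite{KMR14}, as you propose, but by observing that in the regime $\delta\leq(\log(s)\sqrt{\log(N)})^{-1}$ the factor $\delta^{-2}$ already dominates those logarithms, so the KMR condition with $\kappa=1$ is not binding.

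Second, and more decisively, your scheme contains no mechanism that could produce the $\delta^{-1}$ scaling of \eqref{eqn:condSp:large}: requiring $E_1(x)+E_2(x)\lesssim\delta$ with deviation bounds of the form $\sqrt{\cdot/m}$ always yields conditions of type $m\gtrsim\delta^{-2}(\cdots)$, and your remark that one should ``record whether the binding constraint is $\delta^{-2}$ or $\delta^{-1}\cdot\mathrm{polylog}(N)$'' presupposes a trade-off you never construct. In the paper this trade-off comes from the free parameter $\kappa$: the upper $\ell_2$-RIP constant is allowed to degrade to $1+\kappa$ with $\kappa=\sqrt{\delta\log(s)}\log^{1/4}(N)$, which relaxes the KMR requirement $m\gtrsim\kappa^{-2}s\log^2(s)\log^2(N)$ while worsening the Hoeffding requirement $m\gtrsim(1+\kappa)^2\delta^{-2}s\log(N)$; balancing the two gives precisely $m\gtrsim\delta^{-1}s\log(s)\log^{3/2}(N)$ and the $\log(1/\eta)$ terms in \eqref{eqn:condSp:large}. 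Without this (or an equivalent) device, your argument could at best reach a single condition of the form $m\gtrsim\delta^{-2}s\cdot\mathrm{polylog}(N)$, and even that hinges on the unproved chaining claim for $E_1$. Your treatment of $E_2$ by Gaussian concentration with Lipschitz constant $\sqrt{2s/N}$ is fine in spirit (the paper does it on the net, reusing $E_{\mathrm{RIP}}$ for the net-to-set step), but the RIP theorem as a whole remains unestablished in your proposal.
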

Let us remark that for polynomially scaling probabilities $\eta = N^{-\alpha}$ the second and third term in the maximum in \eqref{eqn:condSp:large} can be bounded by a constant $c_\alpha$ times the first term and then $\eqref{eqn:condSp:large}$
reduces to
\[
m \gtrsim \delta^{-1} s \log(s) \log^{3/2}(N),
\]
which is implied by the even simpler condition $m \gtrsim \delta^{-1} s \log^{5/2}(N)$.
We further note that \eqref{s:cond} also gives an implicit condition on $\delta$. In fact, if $\delta \lesssim \sqrt{\log(N)/N}$, then \eqref{s:cond} excludes all nontrivial sparsities $s \geq 1$. However, we expect that the requirement \eqref{s:cond} of small sparsity is only an artefact of our proof and that recovery can also be expected for larger sparsities under conditions
similar to \eqref{eqn:condSp:small} and \eqref{eqn:condSp:large} with possibly more logarithmic factors, see also \cite{MRW16} for an analogue phenomenon for standard compressed sensing. In fact, our proof relies on the RIP$_{1,2}$ and \cite{MRW16} provides at least the lower RIP$_{1,2}$ bound also for larger sparsities.

Let us now state our main result for the LP-reconstruction which unfortunately requires worse scaling in $\delta$ than \eqref{eqn:condSp:small} and \eqref{eqn:condSp:large}.
\begin{theorem}\label{thm:main2} Let $0 < \delta, \eta \leq 1$. 
If $0 < \delta \leq (\log^2(s) \log(N))^{-1/4}$ assume that
\begin{align*}
s & \lesssim \min \{ \sqrt{\delta^4 N/\log(N)}, \delta^2 N/\log(1/\eta) \},\\
m & \gtrsim \delta^{-4} s \log(eN/s),
\end{align*}
and if $(\log^2(s) \log(N))^{-1/4} < \delta \leq 1$ assume that
\begin{align*}
s & \lesssim \min\left\{\delta^{4/3}\sqrt{N/\log^2(N)}, \delta^2 N/\log(1/\eta)\right\}\\
m & \gtrsim \delta^{-4/3} s \max\left\{\log^{4/3}(s) \log^{5/3}(N), \frac{\log(1/\eta)}{\log^{3/2}(s) \log^{1/3}(N)}, \frac{\log^{2/3}(s) \log^{1/3}(N)}{\delta^{4/3} s}\right\}.
\end{align*}
Then the following holds with probability exceeding $1-\eta$: for every $x\in \R^N$ with $\|x\|_1\leq s$ and $\|x\|_2=1$,  the LP-reconstruction $x_{\text{LP}}^{\#}$ satisfies
$\|x-x_{\text{LP}}^{\#}\|_2\lesssim \sqrt{\del}$.
\end{theorem}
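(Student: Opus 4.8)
The plan is to deduce Theorem~\ref{thm:main2} by combining two ingredients: Foucart's observation \cite{Fou17} that an $\ell_1/\ell_2$-RIP already suffices for uniform recovery via the linear program \eqref{eqn:LPPV}, and an $\ell_1/\ell_2$-RIP estimate for the randomly subsampled Gaussian circulant matrix $A=R_I\Gamma_g$, which is the content of Theorem~\ref{thm:mainNT} below. The same two ingredients drive Theorem~\ref{thm:main1}; the only difference is the rate at which the recovery error degrades in the RIP constant.

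First I would invoke the quantitative form of Foucart's lemma: if $A$ satisfies the $\ell_1/\ell_2$-RIP of order $2s$ with constant $\delta'$ --- i.e.\ $(1-\delta')\|v\|_2\le c\,m^{-1}\|Av\|_1\le(1+\delta')\|v\|_2$ for all $2s$-sparse $v$, with $c=\sqrt{\pi/2}$ the Gaussian normalization --- then every minimizer $x_{\text{LP}}^{\#}$ of \eqref{eqn:LPPV} satisfies $\|x-x_{\text{LP}}^{\#}\|_2\lesssim(\delta')^{1/4}$ for every $x$ with $\|x\|_1\le\sqrt{s}$ and $\|x\|_2=1$. The exponent $1/4$ here, as opposed to the $1/2$ that governs hard thresholding, is exactly what forces the worse scaling $m\gtrsim\delta^{-4}s\log(\cdot)$ in Theorem~\ref{thm:main2}.

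The substantial step is the RIP$_{1,2}$ estimate itself: one must show that, for $s$ small and $m$ large in the sense of Theorem~\ref{thm:mainNT}, the matrix $A=R_I\Gamma_g$ satisfies the $\ell_1/\ell_2$-RIP of order $2s$ with a prescribed constant $\delta'$ outside an event of probability $\eta$. I would bound the empirical process $\sup\{|c\,m^{-1}\|R_I\Gamma_g v\|_1-1|:\|v\|_0\le 2s,\ \|v\|_2=1\}$ by treating the Bernoulli subsampling and the Gaussian generator separately and invoking generic-chaining (Dudley-type) bounds at each stage, using that the rows of $\Gamma_g$ are cyclic shifts of the convolution $g\ast v$, whose entries are $\mathcal{N}(0,\|v\|_2^2)$-distributed but dependent. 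Controlling the $\gamma_2$-functionals of the sparse sphere in the induced metrics --- in the spirit of \cite{KMR14,MRW16} --- together with the small-sparsity hypothesis $s\lesssim\sqrt{N/\log N}$ (up to powers of $\delta'$), is the main obstacle; it is also the origin of the restriction \eqref{s:cond} and of the nonoptimal logarithmic and $\delta$-losses.

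It then remains to choose the RIP constant and split into cases. Taking $\delta'$ so that $(\delta')^{1/4}=\sqrt\delta$, i.e.\ $\delta'=\delta^{2}$ in the well-conditioned regime, and substituting into the hypotheses of Theorem~\ref{thm:mainNT} transforms the conditions ``$s\lesssim\sqrt{(\delta')^{2}N/\log N}$, $m\gtrsim(\delta')^{-2}s\log(eN/s)$'' into precisely the first alternative of Theorem~\ref{thm:main2}, while the threshold $\delta'\lesssim(\log(s)\sqrt{\log N})^{-1}$ separating the small- and large-RIP-constant regimes becomes $\delta\lesssim(\log^2(s)\log N)^{-1/4}$, as stated. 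In the ill-conditioned regime one does not want $\delta'=\delta^2$: balancing the power of $\delta'$ against the logarithmic factors in the large-constant RIP$_{1,2}$ bound yields the $\delta^{-4/3}$ scaling and the additional $\log$-powers in the second alternative. Granting Theorem~\ref{thm:mainNT} and Foucart's lemma, the remainder is bookkeeping --- matching normalizations, absorbing the $\delta$- and $\eta$-dependence inside the logarithm, and taking a union bound over all effectively $s$-sparse unit vectors on the favourable event.
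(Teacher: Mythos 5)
Your overall architecture (Foucart-type lemma plus an $\operatorname{RIP}_{1,2}$ estimate for $R_I\Gamma_g$, then bookkeeping) matches the paper, but the key lemma you invoke is misstated in a way that breaks the argument. For the LP, Lemma~\ref{lem:FouHT} does \emph{not} assume $\operatorname{RIP}_{1,2}(2s,\delta')$ over sparse vectors with error $(\delta')^{1/4}$; it assumes the RIP over \emph{effectively} sparse vectors, $\operatorname{RIP}^{\text{eff}}_{1,2}(9s,\delta)$, and then gives the same square-root rate $2\sqrt{5\delta}$ as hard thresholding. The effective-sparsity hypothesis is essential: the LP minimizer is only known to have small $\ell_1$-norm, so one needs the lower bound $\|Az\|_1\gtrsim\|z\|_2$ on all of $\{z:\|z\|_1\le\sqrt{s}\,\|z\|_2\}$, and a sparse $\operatorname{RIP}_{1,2}$ of order $2s$ cannot deliver this: the standard block-decomposition transfer from $t$-sparse to $s$-effectively-sparse vectors loses an additive term of order $\sqrt{s/t}$, which for $t=2s$ is a constant $\approx 1/\sqrt{2}$. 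So the central step of your proposal --- ``sparse RIP of order $2s$ with constant $\delta'=\delta^2$, hence LP error $\sqrt{\delta}$'' --- has no justification, and the claim that the exponent $1/4$ ``forces'' the $\delta^{-4}$ scaling misidentifies the mechanism.

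In the paper the RIP constant fed into Lemma~\ref{lem:FouHT} is $\delta$ itself and the error is $\lesssim\sqrt{\delta}$, exactly as in Theorem~\ref{thm:main1}; the worse $\delta$-dependence in Theorem~\ref{thm:main2} comes from the second part of Theorem~\ref{thm:mainNT}, where one must run the net argument over $C_{s,N}=\{\|x\|_1\le\sqrt{s},\ \|x\|_2=1\}$. Via $C_{s,N}\subset 2\operatorname{conv}(\Sigma_{s,N})$ and Sudakov's inequality, a $\delta$-net of this set has log-cardinality $\lesssim\delta^{-2}s\log(eN/s)$ (see \eqref{eqn:appSud}), and the Hoeffding/symmetrization step requires $m\gtrsim\delta^{-2}\log|\mathcal{N}|$, which multiplies up to $\delta^{-4}s\log(eN/s)$ in \eqref{mN:strengthened}; the $\delta^{-4/3}$ branch and the threshold $(\log^2(s)\log(N))^{-1/4}$ then come from re-optimizing $\kappa=(\delta^4\log^2(s)\log(N))^{1/6}$ against the upper $\ell_2$-RIP condition of Theorem~\ref{thm:RIP2upper}, not from substituting $\delta'=\delta^2$ into Theorem~\ref{thm:main1}. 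Also, your closing step ``a union bound over all effectively $s$-sparse unit vectors'' is not a valid operation over an uncountable set; the paper transfers from the net to the whole set through the event $E_{\text{RIP}}$ (an upper $\ell_2$-bound on sparse vectors, extended to $\operatorname{conv}(\Sigma_{s,N})$). If you insist on deducing the effective RIP from a purely sparse one, you would need $\operatorname{RIP}_{1,2}$ of order roughly $s/\delta^2$ with constant $\sim\delta$; running Theorem~\ref{thm:main1}'s conditions at that sparsity does reproduce $m\gtrsim\delta^{-4}s\log(\cdot)$ but forces the strictly stronger small-sparsity requirement $s\lesssim\sqrt{\delta^6N/\log N}$, which is why the paper proves the effective-sparse RIP directly.
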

We will prove Theorem~\ref{thm:dirRecov} by using a recent observation of Foucart \cite{Fou17}. He showed that one can accurately recover signals from one-bit measurements if the measurement matrix satisfies an appropriate RIP-type property. Let us say that a matrix $A$ \emph{satisfies $\operatorname{RIP}_{1,2}(s,\del)$} if
\begin{equation}\label{eqn:RIP12}
(1-\del)\|x\|_2 \leq \|Ax\|_1 \leq (1+\del)\|x\|_2,\qquad \text{for all} \ x\in \Si_{s,N}
\end{equation}
and \emph{$A$ satisfies $\operatorname{RIP}^\text{eff}_{1,2}(s,\del)$} if
\begin{equation}
\label{eqn:RIPeff}
(1-\del)\|x\|_2 \leq \|Ax\|_1 \leq (1+\del)\|x\|_2,\qquad \text{for all} \ x\in \Si_{s,N}^{\text{eff}}.
\end{equation}
\begin{lemma}
\label{lem:FouHT}
\cite[Theorem 8]{Fou17} 
Suppose that $A$ satisfies $\operatorname{RIP}_{1,2}(2s,\del)$. Then, for every $x\in \R^N$ with $\|x\|_0\leq s$ and $\|x\|_2=1$,  the hard thresholding reconstruction $x_{\text{HT}}^{\#}$ satisfies $\|x-x_{\text{HT}}^{\#}\|_2\leq 2\sqrt{5\del}$.\par
Let $\del\leq 1/5$. Suppose that $A$ satisfies $\operatorname{RIP}_{1,2}^{\text{eff}}(9s,\del)$. Then, for every $x\in \R^N$ with $\|x\|_1\leq s$ and $\|x\|_2=1$,  the LP-reconstruction $x_{\text{LP}}^{\#}$ satisfies
$\|x-x_{\text{LP}}^{\#}\|_2\leq 2\sqrt{5\del}$. 
\end{lemma}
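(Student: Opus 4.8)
I would prove the two claims in tandem, since both hinge on the \emph{proxy vector} $u:=A^{*}\sign(Ax)$. The only facts about $u$ that I would use are: $\langle u,v\rangle=\langle\sign(Ax),Av\rangle\le\|Av\|_{1}$ for all $v$, which with the upper bound in $\operatorname{RIP}_{1,2}(2s,\delta)$ (resp.\ $\operatorname{RIP}^{\mathrm{eff}}_{1,2}(9s,\delta)$) gives $\|u_{W}\|_{2}\le 1+\delta$ whenever $W$ supports a $2s$-sparse (resp.\ a $9s$-sparse) vector; and $\langle u,x\rangle=\|Ax\|_{1}\ge 1-\delta$ from the lower bound (for the LP part one first notes $\Si_{s,N}^{\mathrm{eff}}\subseteq\Si_{9s,N}^{\mathrm{eff}}$).

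For the hard-thresholding claim, set $\hat x:=H_{s}(u)=x_{\text{HT}}^{\#}$, $S:=\operatorname{supp}(x)$ and $T:=\operatorname{supp}(\hat x)$. From $\langle u_{S},x\rangle=\langle u,x\rangle\ge 1-\delta$ and Cauchy--Schwarz I get $\|u_{S}\|_{2}\ge 1-\delta$, while $|S\cup T|\le 2s$ gives $\|u_{S\cup T}\|_{2}\le 1+\delta$; hence $\|u_{T\setminus S}\|_{2}^{2}=\|u_{S\cup T}\|_{2}^{2}-\|u_{S}\|_{2}^{2}\le 4\delta$, and since $|S\setminus T|\le|T\setminus S|$ while every entry of $u$ outside $T$ is dominated in modulus by every entry inside $T$, also $\|u_{S\setminus T}\|_{2}\le\|u_{T\setminus S}\|_{2}\le 2\sqrt{\delta}$. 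Expanding
\[
\|x-\hat x\|_{2}^{2}=\|x_{S\cap T}-u_{S\cap T}\|_{2}^{2}+\|x_{S\setminus T}\|_{2}^{2}+\|u_{T\setminus S}\|_{2}^{2},
\]
using $\|x_{S\cap T}\|_{2}^{2}=1-\|x_{S\setminus T}\|_{2}^{2}$, $\langle x_{S\cap T},u_{S\cap T}\rangle=\langle u,x\rangle-\langle x_{S\setminus T},u_{S\setminus T}\rangle\ge(1-\delta)-2\sqrt{\delta}\,\|x_{S\setminus T}\|_{2}$, and $\|u_{S\cap T}\|_{2}^{2}\le\|u_{S\cup T}\|_{2}^{2}-\|u_{T\setminus S}\|_{2}^{2}$, everything collapses to $\|x-\hat x\|_{2}^{2}\le 4\delta+\delta^{2}+4\sqrt{\delta}\,\|x_{S\setminus T}\|_{2}$. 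The decisive move is then \emph{not} to estimate $\|x_{S\setminus T}\|_{2}\le 1$, but $\|x_{S\setminus T}\|_{2}\le\|x-\hat x\|_{2}$ (on $S\setminus T$ the vector $x-\hat x$ agrees with $x$); reading the result as a quadratic inequality in $\|x-\hat x\|_{2}$ and using $\delta\le 1$ yields $\|x-\hat x\|_{2}\lesssim\sqrt{\delta}$, and a careful accounting of the constants recovers the stated bound $2\sqrt{5\delta}$.

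For the LP claim, let $z:=x_{\text{LP}}^{\#}$; then $\sign(Az)=\sign(Ax)$, so $\langle u,z\rangle=\|Az\|_{1}=1$, while feasibility of $x/\|Ax\|_{1}$ gives $\|z\|_{1}\le\|x\|_{1}/\|Ax\|_{1}\le\sqrt{s}/(1-\delta)$. First I would certify that $z$, and then $x+z$, is effectively $9s$-sparse: writing $T_{9}$ for the support of the $9s$ largest entries of $u$, one has $\|u_{T_{9}^{c}}\|_{\infty}\le\|u_{T_{9}}\|_{2}/\sqrt{9s}\le(1+\delta)/(3\sqrt{s})$, hence $\langle u_{T_{9}^{c}},z\rangle\le\|u_{T_{9}^{c}}\|_{\infty}\|z\|_{1}\le\tfrac12$ for $\delta\le\tfrac15$, so $\langle u_{T_{9}},z\rangle\ge\tfrac12$; combined with $\langle u_{T_{9}},z\rangle\le(1+\delta)\|z\|_{2}$ this forces $\|z\|_{2}\ge\tfrac{1}{2(1+\delta)}$, which with the $\ell_{1}$-bound gives $\|z\|_{1}\le 3\sqrt{s}\,\|z\|_{2}$, and then $\|z\|_{2}\le 1/(1-\delta)$ from $\operatorname{RIP}^{\mathrm{eff}}_{1,2}$. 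The crux is the observation that, since $Ax$ and $Az$ have identical sign patterns, $A(x+z)$ exhibits no cancellation, whence $\|A(x+z)\|_{1}=\|Ax\|_{1}+\|Az\|_{1}\ge 2-\delta$; as $x+z$ is effectively $9s$-sparse, $\operatorname{RIP}^{\mathrm{eff}}_{1,2}(9s,\delta)$ gives $\|x+z\|_{2}\ge(2-\delta)/(1+\delta)$. Finally $\|x-z\|_{2}^{2}=2+2\|z\|_{2}^{2}-\|x+z\|_{2}^{2}\le 2+2/(1-\delta)^{2}-(2-\delta)^{2}/(1+\delta)^{2}$, and one checks the right-hand side is $\le 20\delta$ on $(0,\tfrac15]$, i.e.\ $\|x-z\|_{2}\le 2\sqrt{5\delta}$.

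I expect the LP part to be the main obstacle. In the hard-thresholding part the only real subtlety is recognizing the self-bootstrap: without it one obtains only the weaker rate $\delta^{1/4}$. In the LP part the genuinely delicate point is breaking the apparent circularity ``$z$ effectively sparse $\Longleftrightarrow\|z\|_{2}$ bounded below'', which is precisely why one must pair $z$ against the truncated proxy $u_{T_{9}}$ rather than apply $\operatorname{RIP}^{\mathrm{eff}}_{1,2}$ to $z$ directly. One must also rule out the degenerate alternative in which $x+z$ fails to be effectively $9s$-sparse: there its $\ell_{1}/\ell_{2}$-ratio forces $\|x+z\|_{2}$ so small that $x\approx-z$, which is incompatible with $\sign(Ax)=\sign(Az)$ --- and it is this last check that pins down the constant $9$, and the step most easily botched.
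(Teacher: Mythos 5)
The paper does not actually prove this lemma --- it is quoted from Foucart \cite[Theorem 8]{Fou17} --- so your attempt can only be judged on its own merits, and on those merits it is essentially a correct reconstruction of Foucart's proxy-vector argument: the bounds $\langle u,x\rangle=\|Ax\|_1\geq 1-\delta$, $\|u_W\|_2\leq 1+\delta$ for admissible supports $W$, and (for LP) the truncated proxy $u_{T_9}$ together with the no-cancellation identity $\|A(x+z)\|_1=\|Ax\|_1+\|Az\|_1$ are exactly the right ingredients. Two points deserve tightening. First, in the hard-thresholding part the displayed inequality $\|x-\hat x\|_2^2\leq 4\delta+\delta^2+4\sqrt{\delta}\,\|x_{S\setminus T}\|_2$ combined with the bootstrap only yields $\|x-\hat x\|_2\leq 5\sqrt{\delta}$, not $2\sqrt{5\delta}$; you recover the stated constant either by retaining the discarded term $-\|u_{S\setminus T}\|_2^2$ (a short case distinction then gives $(2+\sqrt5)\sqrt{\delta}$), or more cleanly by Foucart's own route: since $H_s(u)$ is the best $s$-term approximation of $u_{S\cup T}$, one has $\|x-H_s(u)\|_2\leq 2\|x-u_{S\cup T}\|_2$ and $\|x-u_{S\cup T}\|_2^2=1-2\langle u,x\rangle+\|u_{S\cup T}\|_2^2\leq 4\delta+\delta^2\leq 5\delta$, with no bootstrap needed.

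Second, in the LP part the upper RIP bound may only be applied to $x+z$ after its $9s$-effective sparsity is certified, and your proposed closure (``$x\approx -z$ is incompatible with sign agreement'') is not the right mechanism: to turn $\|x+z\|_2$ small into a contradiction you would need an upper bound on $\|A(x+z)\|_1$ in terms of $\|x+z\|_2$, which presupposes the effective sparsity you are trying to establish. The correct closure is the same truncated-proxy pairing you already used for $z$: from $\langle u,x+z\rangle=\langle u,x\rangle+\langle u,z\rangle\geq 2-\delta$ and $\langle u,x+z\rangle\leq (1+\delta)\|x+z\|_2+\frac{1+\delta}{3\sqrt{s}}\|x+z\|_1$ with $\|x+z\|_1\leq \sqrt{s}\bigl(1+(1-\delta)^{-1}\bigr)$, one gets for $\delta\leq 1/5$ that $\|x+z\|_2\geq 3/4$, hence $\|x+z\|_1\leq 3\sqrt{s}\,\|x+z\|_2$, and the rest of your argument (parallelogram identity and the elementary bound $2+2(1-\delta)^{-2}-(2-\delta)^2(1+\delta)^{-2}\leq 20\delta$ on $(0,1/5]$) goes through. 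Note also that you read the hypothesis as $\|x\|_1\leq\sqrt{s}$ rather than the lemma's literal $\|x\|_1\leq s$; this is the reading consistent with the order $9s$ and with Foucart's original statement, so it is the right one.
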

\begin{remark}
It is in general not possible to extend Theorem~\ref{thm:dirRecov} to subgaussian circulant matrices. Indeed, suppose that the measurement matrix is a (rescaled, subsampled) Bernoulli circulant matrix, the threshold vector $\tau$ in \eqref{eqn:1-bitModel} is zero and consider, for $0<\lambda<1$, the normalized $2$-sparse vectors
\begin{equation}
\label{eqn:xlaDef}
x_{+\la} = (1+\la^2)^{-1/2}(1,\la,0,\ldots,0), \qquad x_{-\la} = (1+\la^2)^{-1/2}(1,-\la,0,\ldots,0).
\end{equation}
Then 
\begin{align*}
\sign(\langle (\Ga_{\eps})_i,x_{+\la}\rangle)& =\sign(\eps_{n+1-i} + \la \eps_{n+2-i}) \\
& =  \sign(\eps_{n+1-i}) = \sign(\eps_{n+1-i} - \la \eps_{n+2-i}) =  \sign(\langle (\Ga_{\eps})_i,x_{-\la}\rangle)
\end{align*}
That is, $x_{+\la}$ and $x_{-\la}$ produce identical one-bit measurements.\par  
As a consequence, Lemma~\ref{lem:FouHT} implies that a subsampled Bernoulli circulant matrix cannot satisfy the RIP$_{1,2}(4,\del)$ property for small values of $\del$, regardless of how we subsample and scale the matrix. Indeed, suppose that $A=\alpha R_I \Ga_\eps$ satisfies this property for a suitable $I\subset[N]$ and scaling factor $\alpha$. 
Since $\sign(Ax_{+\la})=\sign(Ax_{-\la})$, we find using Lemma~\ref{lem:FouHT} and the triangle inequality
\begin{align*}
\frac{2\la}{(1+\la^2)^{1/2}} & = \|x_{+\la}-x_{-\la}\|_2 \\
& \leq \|x_{+\la}-H_s(A^*\sign(Ax_{+\la}))\|_2 + \|H_s(A^*\sign(Ax_{-\la}))-x_{-\la}\|_2\\
& \leq 4\sqrt{5\del}.
\end{align*}
By taking $\la\rightarrow 1$ we find $\del\geq 1/40$.

However, by excluding extremely sparse vectors via a suitable $\ell_\infty$-norm bound, it might be possible to work around this counter example. In fact, in the case of unstructured subgaussian random matrices, positive recovery results for sparse vectors with such additional constraint were shown in \cite{ALP14}.
\end{remark}
So far, our recovery results only allow to recover vectors lying on the unit sphere. By incorporating Gaussian dithering in the quantization process we can reconstruct any effectively sparse vector, provided that we have an a-priori upper bound on its energy. 
\begin{theorem}
\label{thm:fullRecov}
Let $A=R_I\Gamma_g$ and let $\tau_1,\ldots,\tau_m$ be independent $\mathcal{N}(0,\pi R^2/2)$-distributed random variables. 
Under the assumptions on $s,m,n,\delta,\eta$ of Theorem~\ref{thm:main2} the following holds with probability exceeding $1-\eta$: for any $x\in \R^N$ with $\|x\|_1\leq \sqrt{s}\|x\|_2$ and $\|x\|_2\leq R$, any solution $x_{\text{CP}}^{\#}$ to the second-order cone program \eqref{eqn:CPTh} satisfies $\|x-x_{\text{CP}}^{\#}\|_2\leq R\sqrt{\del}$.
\end{theorem}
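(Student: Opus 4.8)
The plan is to deduce Theorem~\ref{thm:fullRecov} from the effective $\operatorname{RIP}_{1,2}$ bound of Theorem~\ref{thm:mainNT}, applied to a \emph{lifted} matrix, together with a second-order cone analogue of the LP part of Lemma~\ref{lem:FouHT}. First I would normalise: dividing $x$, the minimiser $x_{\text{CP}}^{\#}$ and $\tau$ by $R$, we may assume $R=1$, so that $\|x\|_1\le\sqrt s$, $\|x\|_2\le1$ and $\tau_i\sim\mathcal N(0,\pi/2)$, the error bound being multiplied back by $R$ at the end. Set $\zeta:=\tau/\sqrt{\pi/2}$, a standard Gaussian vector independent of $(g,\theta)$ by the hypothesis on $\tau$, put $B:=[\,R_I\Gamma_g\ \ \zeta\,]$, and let $\bar B:=\frac1m\sqrt{\pi/2}\,B$ be its $\operatorname{RIP}_{1,2}$-normalisation. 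Conditionally on $\theta$, the $i$-th entry of $B[u,t]$ is $\mathcal N(0,\|[u,t]\|_2^2)$-distributed, so $\bE\|\bar B[u,t]\|_1=\|[u,t]\|_2$ exactly as for the normalised $R_I\Gamma_g$; and, crucially, $\sign(Ax+\tau)=\sign(B w)$ with $w:=[\,x,\sqrt{\pi/2}\,]\in\R^{N+1}$, while $\sign(Ax_{\text{CP}}^{\#}+\tau)=\sign(B w^{\#})$ with $w^{\#}:=[\,x_{\text{CP}}^{\#},\sqrt{\pi/2}\,]$, so the sign constraint in \eqref{eqn:CPTh} is precisely $\sign(\bar B w^{\#})=\sign(\bar B w)$.

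The first ingredient is that, under the hypotheses of Theorem~\ref{thm:main2}, the lifted matrix $\bar B$ satisfies $\operatorname{RIP}^{\text{eff}}_{1,2}(c_1 s,\delta)$ with probability at least $1-\eta$, for an absolute constant $c_1$. This should be exactly the content of Theorem~\ref{thm:mainNT}: appending a single column of i.i.d.\ Gaussians independent of $g$, together with one extra coordinate, perturbs the chaining functionals and the tail estimates behind that theorem only by absolute-constant factors, which are absorbed into $c_1$ and into the slack already built into \eqref{eqn:condSp:small}--\eqref{eqn:condSp:large}; in particular the enlarged sparsity level $c_1 s$ still meets the small-sparsity constraint, that constraint being invariant under inflating $s$ by a constant.

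The second ingredient is the cone-program reduction. Since $x$ is feasible for \eqref{eqn:CPTh}, optimality of $x_{\text{CP}}^{\#}$ forces $\|x_{\text{CP}}^{\#}\|_1\le\|x\|_1\le\sqrt s$; combined with $\|w\|_2,\|w^{\#}\|_2\ge\sqrt{\pi/2}$ this shows that $w$ and $w^{\#}$ are both $c_2 s$-effectively sparse. I would then adapt the LP argument of \cite{Fou17} to the present constraints: the matching sign patterns place $\bar B w$ and $\bar B w^{\#}$ in a common orthant, their $\ell_1$-norms are pinned near $\|w\|_2$ respectively $\|w^{\#}\|_2$ by $\operatorname{RIP}^{\text{eff}}_{1,2}$, and $\|w^{\#}\|_1\le\|w\|_1$; from these facts, together with the usual block decomposition of the (not necessarily effectively sparse) difference $w^{\#}-w$ on which the lower $\operatorname{RIP}^{\text{eff}}_{1,2}$ bound is applied block by block, one concludes $\|w^{\#}-w\|_2\lesssim\sqrt\delta$. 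Since $w$ and $w^{\#}$ share their last coordinate, $\|x-x_{\text{CP}}^{\#}\|_2=\|w^{\#}-w\|_2\lesssim\sqrt\delta$, and restoring $R$ (absorbing the absolute constant into a rescaling of $\delta$) gives $\|x-x_{\text{CP}}^{\#}\|_2\le R\sqrt\delta$.

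I expect the genuine work to lie in the first ingredient: one must verify that the proof of Theorem~\ref{thm:mainNT} really does go through for the lifted matrix $\bar B$ and that every parameter constraint of Theorem~\ref{thm:main2} survives the lifting --- this is the one-bit-sensing analogue of passing from recovery of unit vectors to recovery of energy-bounded vectors via Gaussian dithering. The cone-program reduction is the one genuinely new piece of argument, but it runs parallel to the LP case of Lemma~\ref{lem:FouHT}; the sole role of the constraint $\|z\|_2\le R$ in \eqref{eqn:CPTh} is to keep $w^{\#}$ inside the effective-sparsity cone on which the lower $\operatorname{RIP}^{\text{eff}}_{1,2}$ bound can be applied, so I do not expect it to present a serious obstacle.
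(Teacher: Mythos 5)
Your first ingredient is exactly the paper's route: Theorem~\ref{thm:mainNT} is indeed proved not only for $\frac1m\sqrt{\pi/2}\,R_I\Gamma_g$ but also for the lifted matrix $\frac1m\sqrt{\pi/2}\,R_I[\Gamma_g\ h]$, which after the identification $\tau=\sqrt{\pi/2}\,R\,R_Ih$ is your $\bar B$, so nothing new needs to be verified there. The gap is in your second ingredient, the deterministic cone-program reduction. The facts you list (matching signs, $\|w^{\#}\|_1\le\|w\|_1$, RIP$^{\text{eff}}_{1,2}$ pinning $\|\bar Bw\|_1\approx\|w\|_2$ and $\|\bar Bw^{\#}\|_1\approx\|w^{\#}\|_2$) do not suffice as you propose to use them. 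If you run the midpoint/parallelogram argument directly on the unnormalised lifted vectors $w=[x,R]$, $w^{\#}=[x^{\#},R]$, the parallelogram identity leaves the term $\bigl(\tfrac{\|w\|_2-\|w^{\#}\|_2}{2}\bigr)^2$, and nothing in \eqref{eqn:CPTh} controls this norm mismatch a priori: unlike the LP, there is no constraint of the type $\|Az\|_1=1$ pinning $\|\bar Bw^{\#}\|_1$ to $\|\bar Bw\|_1$, so the resulting bound is circular ($|\,\|w\|_2-\|w^{\#}\|_2|\le\|w-w^{\#}\|_2$ makes the inequality vacuous). Your alternative suggestion of applying the lower RIP block by block to the difference $w^{\#}-w$ cannot work either: sign consistency only yields the \emph{lower} bound $\|\bar B(w-w^{\#})\|_1\ge\bigl|\,\|\bar Bw\|_1-\|\bar Bw^{\#}\|_1\bigr|$ via the reverse triangle inequality, and no upper bound on $\|\bar B(w-w^{\#})\|_1$ is available to compare against, so lower-RIP estimates on blocks of the difference give no information.

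The missing idea — which is the heart of the paper's Lemma~\ref{lem:RIPfullRec}, following \cite[Corollary 9]{BFN17} and \cite[Section 8.4]{Fou17} — is to first project the lifted vectors onto the unit sphere, $\bar x=[x,R]/\|[x,R]\|_2$ and $\bar x^{\#}=[x^{\#},R]/\|[x^{\#},R]\|_2$, and to use the two-point inequality \eqref{eqn:normalization}, $\|u-v\|_2\le 2\bigl\|\tfrac{[u,1]}{\|[u,1]\|_2}-\tfrac{[v,1]}{\|[v,1]\|_2}\bigr\|_2$ for $u,v\in B_{\ell_2^N}$, to transfer the estimate back to $\|x-x^{\#}\|_2\le 2R\|\bar x-\bar x^{\#}\|_2$. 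After this normalisation both vectors are unit, the parallelogram identity has no spurious term, and the argument closes: sign consistency gives $\|C(\tfrac{\bar x+\bar x^{\#}}{2})\|_1=\tfrac12(\|C\bar x\|_1+\|C\bar x^{\#}\|_1)\ge 1-\delta$, while the upper RIP applied to the midpoint (whose effective sparsity is established in the paper by a block decomposition of the \emph{midpoint}, used for an upper bound on $\|Cz\|_1$ and hence a lower bound on $\|z\|_2$ — note the contrast with your use of blocks on the difference) yields $\|\tfrac{\bar x+\bar x^{\#}}{2}\|_2\ge\tfrac{1-\delta}{1+\delta}$ and thus $\|\bar x-\bar x^{\#}\|_2\lesssim\sqrt\delta$. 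This also corrects your closing remark: the constraint $\|z\|_2\le R$ is not only there to keep $w^{\#}$ in the effective-sparsity cone (for that, $\|[x^{\#},R]\|_2\ge R$ already suffices); it is needed so that $x/R$ and $x^{\#}/R$ lie in $B_{\ell_2^N}$, which is precisely the hypothesis of \eqref{eqn:normalization}.
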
 
To prove this result, we let $C\in\R^{m\ti (N+1)}$ and consider the following abstract version of \eqref{eqn:CPTh}: 
\begin{equation} 
\label{eqn:CPabs}
  \min_{z \in \R^N} \|z\|_1 \qquad \text{s.t.} \qquad \text{sign}(C[z,R]) = \text{sign}(C[x,R]), \  \|z\|_2\leq R.
\end{equation}
It is straightforward to verify that \eqref{eqn:CPTh} is obtained by taking $C=\frac{1}{m}\sqrt{\frac{\pi}{2}}B$, where
\begin{equation}
\label{eqn:Bdef}
B:=R_I [\Gamma_g \ h] = R_I \begin{bmatrix}
g_{N} & g_{1} & g_2 & \cdots & g_{N-2} & g_{N-1} & h_1\\
g_{N-1} & g_N & g_1 &  \cdots & g_{N-3} & g_{N-2} & h_2 \\
g_{N-2} & g_{N-1} & g_N & \cdots & g_{N-4} & g_{N-3}& h_3\\
\vdots & \vdots & \vdots&  \vdots & \vdots & \vdots & \vdots\\
g_1 & g_2 & g_3 & \cdots & g_{N-1} & g_N & h_N
\end{bmatrix},
\end{equation}
and $h$ is a standard Gaussian vector that is independent of $\theta$ and $g$.
\begin{lemma}
\label{lem:RIPfullRec}
Let $\del<1/5$. Suppose that $C$ satisfies $\text{RIP}_{1,2}^{\text{eff}}(36(\sqrt{s}+1)^2,\del)$. Then, for any $x\in \R^N$ satisfying $\|x\|_1\leq \sqrt{s}\|x\|_2$ and $\|x\|_2\leq R$, any solution $x^{\#}$ to \eqref{eqn:CPabs} satisfies
$$\|x-x^{\#}\|_2\leq 2R\sqrt{\del}.$$ 
\end{lemma}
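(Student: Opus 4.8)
The plan is to follow the blueprint of Foucart's Lemma~\ref{lem:FouHT}, carrying the appended coordinate along. Put $u=[x,R]$ and $v=[x^{\#},R]$ in $\R^{N+1}$, so that the quantity to be bounded is $\|u-v\|_2=\|x-x^{\#}\|_2$. Since $x$ is itself feasible for \eqref{eqn:CPabs}, optimality of $x^{\#}$ gives $\|x^{\#}\|_1\le\|x\|_1\le\sqrt s\,\|x\|_2\le\sqrt s\,R$; we also record $\|x^{\#}\|_2\le R$, $\|x\|_2\le R$ and $\sign(Cu)=\sign(Cv)$.

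The first step is to observe that $u$, $v$, and every vector $\al u+v$ with $\al>0$ is effectively sparse at level $(\sqrt s+1)^2$, hence a fortiori at level $36(\sqrt s+1)^2$, so that the hypothesis $\text{RIP}_{1,2}^{\text{eff}}(36(\sqrt s+1)^2,\del)$ applies to each of them. Indeed, the appended entry of $\al u+v$ equals $(\al+1)R$, so $\|\al u+v\|_2\ge(\al+1)R$, while $\|\al u+v\|_1\le\al\|x\|_1+\|x^{\#}\|_1+(\al+1)R\le(\al+1)(\sqrt s\,\|x\|_2+R)\le(\al+1)(\sqrt s+1)R$ by the optimality bound and $\|x\|_2\le R$; dividing gives $\|\al u+v\|_1\le(\sqrt s+1)\|\al u+v\|_2$. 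This is the only place where the $\ell_1$-optimality of $x^{\#}$ enters, in complete analogy with the basis-pursuit step of a classical RIP recovery proof.

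The heart of the matter is a near-collinearity estimate for $u$ and $v$. Since $\sign(Cu)=\sign(Cv)$, for every $\al>0$ the coordinates of $\al(Cu)$ and $Cv$ agree in sign, so $\|C(\al u+v)\|_1=\al\|Cu\|_1+\|Cv\|_1$. Write $p=\|u\|_2$, $q=\|v\|_2$; applying the upper RIP bound to $\al u+v$, the lower RIP bounds to $u$ and to $v$, squaring, and optimizing over $\al$ (optimal choice $\al=q/p$), one extracts an estimate of the form $\scalar{u}{v}\ge(1-c_1\del)\,pq$ — i.e.\ $u$ and $v$ are nearly collinear. I expect this to be the main obstacle: the naive execution of this quadratic-in-$\al$ computation gives a constant $c_1$ for which the estimate becomes vacuous before $\del$ reaches $1/5$, so the argument must be run more carefully, presumably using the slack in the RIP level $36(\sqrt s+1)^2$ (for instance by also testing against $u-v$, which is effectively sparse at that level as soon as $\|u-v\|_2$ is not already of order $R$) to obtain a constant valid on the whole range $0<\del<1/5$.

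The last step exploits the dithering. Because $u$ and $v$ have the same last coordinate $R$, near-collinearity forces their norms to be close: $\|u/p-v/q\|_2^2=2\bigl(1-\scalar{u}{v}/(pq)\bigr)\le 2c_1\del$, and comparing the last coordinates $R/p$ and $R/q$ of these unit vectors gives $|1/p-1/q|\le\sqrt{2c_1\del}/R$, whence $|p-q|\le\sqrt{2c_1\del}\,pq/R=O(\sqrt\del)\,R$ since $p,q\le\sqrt2\,R$. Combining everything,
\[
\|x-x^{\#}\|_2^2=\|u-v\|_2^2=(p-q)^2+2pq\Bigl(1-\frac{\scalar{u}{v}}{pq}\Bigr)\lesssim\del R^2,
\]
and keeping track of the constants (using $\del<1/5$ and $p,q\le\sqrt2\,R$) yields $\|x-x^{\#}\|_2\le 2R\sqrt\del$.
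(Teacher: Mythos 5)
Your proof is correct in substance and, at its core, runs on the same mechanism as the paper's: sign consistency makes $\|C(\cdot)\|_1$ additive on nonnegative combinations of $u=[x,R]$ and $v=[x^{\#},R]$, the lower RIP bound is applied to each vector separately and the upper bound to their weighted sum, and your optimal choice $\al=q/p$ makes $\al u+v$ a positive multiple of the paper's normalized midpoint $(\bar x+\bar x^{\#})/2$, so your near-collinearity estimate $1-\scalar{u}{v}/(pq)\le 8\del/(1+\del)^2$ is exactly the paper's parallelogram-identity step in different clothing. Where you genuinely deviate — and in fact simplify — is in how the RIP hypothesis is verified for the summed vector: the paper must lower bound $\|(\bar x+\bar x^{\#})/2\|_2$ via a Foucart-style block decomposition $T_0,T_1,\dots$ (this is the source of the constant $36$ and of the restriction $\del\le 1/5$), whereas you note that the appended coordinate of $\al u+v$ equals $(\al+1)R$, so effective sparsity at level $(\sqrt s+1)^2$ comes for free from the dithering anchor; likewise you re-derive by hand the denormalization step the paper imports from \cite[Corollary 9]{BFN17} (inequality \eqref{eqn:normalization}), using the equal last coordinates to control $|p-q|$. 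Your route therefore needs only $\operatorname{RIP}^{\text{eff}}_{1,2}((\sqrt s+1)^2,\del)$, a weaker hypothesis than the stated one, and dispenses with the block decomposition entirely.

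Two caveats. First, your worry that the naive $\al$-optimization becomes ``vacuous before $\del$ reaches $1/5$'' is unfounded: $1-\scalar{u}{v}/(pq)\le 8\del/(1+\del)^2$ holds for all $\del$ and is all your later steps use; and the proposed repair via testing $u-v$ would not work anyway, since sign consistency gives no additivity of $\|C(\cdot)\|_1$ for differences (and its effective sparsity would be circular). Second, the constant: a straightforward accounting of your bounds ($p,q\le\sqrt2\,R$, $1-\scalar{u}{v}/(pq)\le 8\del$, hence $|p-q|\le 8R\sqrt\del$) yields roughly $10R\sqrt\del$, not the claimed $2R\sqrt\del$; but the paper's own proof as written delivers $8R\sqrt\del/(1+\del)$ rather than $2R\sqrt\del$ as well, so this is a matter of constant bookkeeping in the lemma statement, not a structural defect of your argument.
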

The following proof is based on arguments in \cite[Section 8.4]{Fou17} and \cite[Corollary 9]{BFN17}.
\begin{proof}
In the proof of \cite[Corollary 9]{BFN17} it shown that 
\begin{equation}
\label{eqn:normalization}
 \norm{u-v}{2} \leq 2 \left \Vert \frac{[u,1]}{\norm{[u,1]}{2}} - \frac{[v,1]}{\norm{[v,1]}{2}} 
 \right \Vert_2 
\end{equation}
for any two vectors $u,v \in B_{\ell_2^N}$. Let $x \in  \Si_{s,N}^{\text{eff}} \cap R B_{\ell_2^N}$, 
$x^{\#}$ be any solution to \eqref{eqn:CPabs} and write
$$\bar{x} = [x,R]/\norm{[x,R]}{2}, \qquad \bar{x}^\#=[x^\#,R]/\norm{[x^\#,R]}{2}.$$
Since $x/R,x^\#/R\in B_{\ell_2^N}$, \eqref{eqn:normalization} implies that
$$\|x-x^{\#}\|_2 \leq 2R \|\bar{x}-\bar{x}^{\#}\|_2.$$ 
By the parallellogram identity,
\begin{equation}
\label{eqn:parallell}
\Big\|\frac{\bar{x}-\bar{x}^\#}{2}\Big\|_2^2= \frac{\norm{\bar{x}}{2}^2 + \norm{\bar{x}^\#}{2}^2}{2} - \left \Vert \frac{\bar{x}+\bar{x}^\#}{2} \right\Vert_2^2.
\end{equation}
Let us observe that $[x,R]$ and $[x^\#,R]$ are $(\sqrt{s}+1)^2$-effectively sparse. Indeed, by optimality of $x^{\#}$ for \eqref{eqn:CPabs} and $s$-effective sparsity of $x$,
$$\|[x^{\#},R]\|_1\leq \|[x,R]\|_1 \leq \sqrt{s}\|x\|_2 + R\leq R(\sqrt{s}+1)$$
and $\|[x,R]\|_2,\|[x^{\#},R]\|_2\geq R$. We claim that 
\begin{equation}
\label{eqn:CCeff}
z:=\frac{\bar{x}+\bar{x}^\#}{2} \in \Sigma_{36(\sqrt{s}+1)^2,N+1}^\text{eff} \; .
\end{equation}
Once this is shown, we can use $\text{sign}(C\bar{x}) = \text{sign}(C\bar{x}^\#)$ and the RIP$_{1,2}^{\text{eff}}(36(\sqrt{s}+1)^2,\del)$-property of $C$ to find
\begin{align}
\label{eqn:zNormLB}
  \left\Vert \frac{\bar{x}+\bar{x}^\#}{2} \right\Vert_2 &\geq \frac{1}{1+\del}\left\Vert C\Big(\frac{\bar{x}+\bar{x}^\#}{2}\Big)\right\Vert_1 =  
  \frac{\norm{C\bar{x}}{1} + \norm{C\bar{x}^\#}{1}}{2(1+\del)} \geq \frac{(1-\del)}{(1+\del)}.
\end{align}
Hence, \eqref{eqn:parallell} implies
$$\Big\|\frac{\bar{x}-\bar{x}^\#}{2}\Big\|_2^2\leq 1-\frac{(1-\del)^2}{(1+\del)^2}\leq \frac{4\del}{(1+\del)^2}.$$
Let us now prove \eqref{eqn:CCeff}. Since $[x,R]$ and $[x^\#,R]$ are $(\sqrt{s}+1)^2$-effectively sparse,
$$\|z\|_1 \leq \frac{1}{2} \Big\|\frac{[x,R]}{\|[x,R]\|_2}\Big\|_1 + \frac{1}{2} \Big\|\frac{[x^\#,R]}{\|[x^\#,R]\|_2}\Big\|_1\leq \sqrt{s}+1.$$
It remains to bound $\|z\|_2$ from below. In \eqref{eqn:zNormLB} we already observed that
\begin{equation}
\label{eqn:Czlower}
\|Cz\|_1 = \frac{1}{2} \Big\|\frac{C[x,R]}{\|[x,R]\|_2}\Big\|_1 + \frac{1}{2} \Big\|\frac{C[x^\#,R]}{\|[x^\#,R]\|_2}\Big\|_1 \geq (1-\del).
\end{equation}
Set $t=8s+8\geq 4(\sqrt{s}+1)^2$. Let $T_0$ be the index set corresponding to the $t$ largest entries of $z$, $T_1$ be the set corresponding to the next $t$ largest entries of $z$, and so on. Then, for all $k\geq 1$,
$$\|z_{T_k}\|_2 \leq \sqrt{t}\|z_{T_k}\|_{\infty} \leq \|z_{T_{k-1}}\|_1/\sqrt{t}.$$
Since $C$ satisfies RIP$_{1,2}^{\text{eff}}(36(\sqrt{s}+1)^2,\del)$, it satisfies RIP$_{1,2}(t,\del)$ and hence
\begin{align}
\label{eqn:Czupper}
\|Cz\|_1 & \leq \sum_{k\geq 0} \|Cz_{T_k}\|_1 \leq (1+\del)\Big(\|z_{T_0}\|_2 + \sum_{k\geq 1} \|z_{T_k}\|_2\Big) \nonumber \\
& \leq (1+\del)\|z\|_2 + \frac{(1+\del)}{\sqrt{t}}\|z\|_1 \leq (1+\del)\|z\|_2 + \frac{(1+\del)}{\sqrt{t}}(\sqrt{s}+1). \nonumber\\
& \leq (1+\del)\|z\|_2 + \frac{1}{2}(1+\del). 
\end{align}
Since $\del\leq 1/5$, \eqref{eqn:Czlower} and \eqref{eqn:Czupper} together yield 
$$\|z\|_2\geq \frac{(1-\del) - \frac{1}{2}(1+\del)}{(1+\del)} =\frac{\frac{1}{2}-\frac{3}{2}\del}{1+\del} \geq \frac{1}{6}.$$
\end{proof} 

\section{Proof of the RIP$_{1,2}$-properties}

We will now prove the RIP-properties required in Lemmas~\ref{lem:FouHT} and \ref{lem:RIPfullRec}. For our analysis we recall two standard concentration inequalities. The Hanson-Wright inequality \cite{HaW71} states that if $g$ is a standard Gaussian vector and $B\in \R^{N\ti N}$, then 
\begin{equation}
\label{eqn:H-W}
\bP(|g^T Bg-\E g^T Bg|\geq t) \leq \exp\Big(-c\min\Big\{\frac{t^2}{\|B\|_F^2},\frac{t}{\|B\|_{\ell_2\to\ell_2}}\Big\}\Big),
\end{equation}
where $c$ is an absolute constant and $\|B\|_F$ and $\|B\|_{\ell_2\to\ell_2}$ are the Frobenius and operator norms of $B$, respectively. We refer to \cite{RuV13} for a modern proof. In addition, we will use a well-known concentration inequality for suprema of Gaussian processes (see e.g.\ \cite[Theorem 5.8]{BLM13}). If $T\subset \R^N$, then
\begin{equation}
\label{eqn:supGauss}
\bP\Big(\Big|\sup_{x\in T} \langle x,g\rangle - \E\sup_{x\in T} \langle x,g\rangle\Big|\geq t\Big)\leq 2e^{-t^2/2\si^2},
\end{equation}
where $\si^2 = \sup_{x\in T} \|x\|_2^2$. We will use the following observation.
\begin{lemma}
Suppose that $y\in \R^N$ satisfies $\|y\|_1\leq \sqrt{s}$ and $\|y\|_2=1$. Let $g$ be a standard Gaussian vector. For any $t>0$,
\begin{equation}
\label{eqn:Gammal2}
\bP\Big(\Big|\frac{1}{N}\|\Gamma_g y\|_2^2 - 1\Big| \geq t\Big) \leq 2e^{-cN\min\{t^2,t\}/s}
\end{equation}
and
\begin{equation}
\label{eqn:Gammal1}
\bP\Big(\Big|\frac{1}{N} \sqrt{\frac{\pi}{2}} \|\Ga_g y\|_1 - 1\Big|\geq t\Big) \leq 2e^{-N t^2/\pi s}.
\end{equation} 
If $h$ is a standard Gaussian vector and $y\in \R^{N+1}$ satisfies $\|y\|_1\leq \sqrt{s}$ and $\|y\|_2=1$, then 
\begin{equation}
\label{eqn:Gammal1app}
\bP\Big(\Big|\frac{1}{N} \sqrt{\frac{\pi}{2}} \|[\Ga_g \ h] y\|_1 - 1\Big|\geq t\Big) \leq 2e^{-N t^2/\pi s}.
\end{equation} 
\end{lemma}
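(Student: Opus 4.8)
The plan is to reduce all three estimates to the two standard concentration inequalities \eqref{eqn:H-W} and \eqref{eqn:supGauss}, using only the hypotheses $\|y\|_1\leq\sqrt{s}$, $\|y\|_2=1$ and the spectral structure of circulant matrices. Two preliminary facts are used throughout. First, by the symmetry of circular correlation, the coordinates of $\Gamma_g y$ coincide, up to a reindexing, with those of $\Gamma_y g$; hence $\|\Gamma_g y\|_p = \|\Gamma_y g\|_p$ for all $p$, and each coordinate of $\Gamma_g y$ is a Gaussian linear functional of $g$ whose coefficient vector is a cyclic shift of $y$, so that $(\Gamma_g y)_i\sim\mathcal{N}(0,\|y\|_2^2)=\mathcal{N}(0,1)$ and in particular $\E|(\Gamma_g y)_i| = \sqrt{2/\pi}$. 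Second, $\Gamma_y$ is a normal matrix whose eigenvalues $\la_1,\dots,\la_N$ are the discrete Fourier transform of $y$, so $\max_k|\la_k| \leq \|y\|_1 \leq \sqrt s$ while $\sum_k|\la_k|^2 = \|\Gamma_y\|_F^2 = N\|y\|_2^2 = N$.

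For \eqref{eqn:Gammal2} I would write $\tfrac1N\|\Gamma_g y\|_2^2 = \tfrac1N\|\Gamma_y g\|_2^2 = g^T B g$ with $B=\tfrac1N\Gamma_y^T\Gamma_y$, a real symmetric positive semidefinite (circulant) matrix with eigenvalues $|\la_k|^2/N$. Then $\E\, g^TBg = \operatorname{tr}(B) = \tfrac1N\|\Gamma_y\|_F^2 = 1$, $\|B\|_{\ell_2\to\ell_2} = \tfrac1N\max_k|\la_k|^2 \leq s/N$, and $\|B\|_F^2 = \tfrac1{N^2}\sum_k|\la_k|^4 \leq \tfrac1{N^2}\bigl(\max_k|\la_k|^2\bigr)\sum_k|\la_k|^2 \leq s/N$. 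Substituting these into the Hanson--Wright inequality \eqref{eqn:H-W} gives $\bP(|g^TBg-1|\geq t) \leq \exp(-cN\min\{t^2,t\}/s)$, which is \eqref{eqn:Gammal2} (the factor $2$ in the stated bound being a harmless relaxation).

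For \eqref{eqn:Gammal1} the point is that $\|\Gamma_g y\|_1 = \|\Gamma_y g\|_1 = \sup_{\eps\in\{-1,1\}^N}\skp{\eps}{\Gamma_y g} = \sup_\eps\skp{\Gamma_y^T\eps}{g}$ is a supremum of a Gaussian process. Its mean equals $\sum_i\E|(\Gamma_g y)_i| = N\sqrt{2/\pi}$, so that $\E\bigl[\tfrac1N\sqrt{\pi/2}\,\|\Gamma_g y\|_1\bigr]=1$, and its weak variance is $\sigma^2 = \max_\eps\|\Gamma_y^T\eps\|_2^2 \leq \|\Gamma_y\|_{\ell_2\to\ell_2}^2\,N \leq sN$. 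Applying \eqref{eqn:supGauss} and rescaling (taking the deviation there to be $Nt\sqrt{2/\pi}$, so that $(Nt\sqrt{2/\pi})^2/(2sN) = Nt^2/(\pi s)$) gives \eqref{eqn:Gammal1}. The bound \eqref{eqn:Gammal1app} is proved the same way after splitting $y=[y',y_{N+1}]$ and writing $[\Gamma_g\ h]y = \Gamma_g y' + y_{N+1}h$: since $g$ and $h$ are independent, each coordinate is $\mathcal{N}(0,\|y'\|_2^2+y_{N+1}^2) = \mathcal{N}(0,1)$, so the mean is again $N\sqrt{2/\pi}$; expressing $\|[\Gamma_g\ h]y\|_1$ as $\sup_\eps\bigl(\skp{\Gamma_{y'}^T\eps}{g}+y_{N+1}\skp{\eps}{h}\bigr)$, a supremum of a Gaussian process in $(g,h)\in\R^{2N}$, one gets $\sigma^2 \leq N\bigl(\|\Gamma_{y'}\|_{\ell_2\to\ell_2}^2 + y_{N+1}^2\bigr) \leq N\bigl(\|y'\|_1^2 + y_{N+1}^2\bigr) \leq N\|y\|_1^2 \leq Ns$, and \eqref{eqn:supGauss} again concludes.

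The computations are largely routine; the one place that needs a little care is the last variance estimate, where one must use $\|y'\|_1^2 + y_{N+1}^2 \leq (\|y'\|_1+|y_{N+1}|)^2 = \|y\|_1^2$ rather than bounding $\|\Gamma_{y'}\|_{\ell_2\to\ell_2}^2\leq\|y'\|_1^2\leq s$ and $y_{N+1}^2\leq 1\leq s$ separately — the latter would cost a factor $2$ in the exponent. The analogous (and, in \eqref{eqn:Gammal2}, the essential) observation is simply that the spectral norm of the circulant matrix generated by $y$ is at most $\|y\|_1$; this is what converts the $\ell_1$-hypothesis into the crucial $s/N$ scaling of both $\|B\|_F$ and $\|B\|_{\ell_2\to\ell_2}$ and of $\sigma^2/N$.
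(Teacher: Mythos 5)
Your proof is correct and follows essentially the same route as the paper: the Hanson--Wright inequality applied to $\|\Gamma_y g\|_2^2$ using $\|\Gamma_y\|_{\ell_2\to\ell_2}\le\|y\|_1\le\sqrt{s}$, and Gaussian concentration of suprema for the $\ell_1$ quantities with the same variance bound (of order $s/N$ after normalization) and the same coordinatewise computation of the mean. The only immaterial differences are that you obtain the Hanson--Wright parameters via the DFT eigenvalues of $\Gamma_y$ where the paper uses submultiplicativity together with Young's inequality $\|y\ast z\|_2\le\|y\|_1\|z\|_2$, and in the $[\Gamma_g\ h]$ case your estimate $\|y_{[N]}\|_1^2+y_{N+1}^2\le\|y\|_1^2\le s$ is a clean (indeed slightly tidier) substitute for the paper's corresponding bound.
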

\begin{proof}
Note that $\Gamma_g y=\Gamma_y g$. By the Hanson-Wright inequality \eqref{eqn:H-W}
 \begin{equation*}
  \bP\Big(| \|\Gamma_y g\|_2^2 - N\|y\|_2^2| \geq Nt\Big) 
  \leq \exp\left(-c\min\left\{ 
  \frac{t^2 N^2}{\|\Gamma_y^* \Gamma_y\|_F^2}, 
  \frac{t N}{\|\Gamma_y^* \Gamma_y\|_{\ell^2 \to \ell^2}}.
  \right\}\right) \; .
 \end{equation*}
Recall that the convolution satisfies $\|\Gamma_y z\|_2 = \|y * z\|_2 \leq \|y\|_1 \|z\|_2 \leq \sqrt{s} \|z\|_2$ for all $z \in \R^N$,
which implies 
\begin{align}\label{Gamma:opnorm}
\|\Gamma_y^* \Gamma_y \|_{\ell_2 \to \ell_2} & = \|\Gamma_y\|^2_{\ell_2 \to \ell_2}  \leq s, \\
\|\Gamma_y^* \Gamma_y\|_F &\leq \|\Gamma_y\|_{\ell_2 \to \ell_2} \|\Gamma_y\|_F \leq \sqrt{s} \sqrt{N} \|y\|_2 = \sqrt{sN}. \notag
 \end{align}
To prove \eqref{eqn:Gammal1} observe that
$$\E\Big(\frac{1}{N}\sqrt{\frac{\pi}{2}} \|\Ga_g y\|_1\Big) = \sqrt{\frac{\pi}{2}}\E|\langle g,y\rangle| = \|y\|_2 = 1$$
and
$$\|\Ga_g y\|_1 = \sup_{z\in B_{\ell_{\infty}^N}} \langle z,\Gamma_g y\rangle  = \sup_{z\in B_{\ell_{\infty}^N}} \langle z,\Gamma_y g\rangle = \sup_{z\in B_{\ell_{\infty}^N}} \langle \Ga_y^* z,g\rangle.$$
Hence, by the concentration inequality \eqref{eqn:supGauss} for suprema of Gaussian processes, 
$$\bP\Big(\Big|\frac{1}{N} \sqrt{\frac{\pi}{2}} \|\Ga_g y\|_1 - 1\Big|\geq t\Big) \leq 2e^{-t^2/2\si_y^2},$$
where
$$\si_y^2 = 
\frac{1}{N^2}\frac{\pi}{2} \sup_{z\in B_{\ell_{\infty}^N}} \|\Ga_y^* z\|_2^2.$$
For any $z\in B_{\ell_{\infty}^N}$, we obtain using 
\eqref{Gamma:opnorm} 
$$\|\Ga_y^* z\|_2^2 \leq \|\Ga_y\|_{\ell_2 \to \ell_2}^2 \|z\|_2^2 \leq sN.$$
The proof of \eqref{eqn:Gammal1app} is similar. By writing $y=[y_{[N]},y_{N+1}]$ we find
$$[\Ga_g \ h]y=\Ga_g y_{[N]} + y_{N+1}h = [\Gamma_{y_{[N]}} \ y_{N+1}\text{Id}_N][g,h].$$
It follows that
$$\|[\Ga_g \ h]y\|_1 = \sup_{z\in B_{\ell_{\infty}^{N}}} \langle z, [\Gamma_{y_{[N]}} \ y_{N+1}\text{Id}_N][g,h]\rangle  = \sup_{z\in B_{\ell_{\infty}^{N}}} \langle  [\Gamma_{y_{[N]}} \ y_{N+1}\text{Id}_N]^*z,[g,h]\rangle.$$
Moreover,
$$\E\Big(\frac{1}{N}\sqrt{\frac{\pi}{2}} \|[\Ga_g \ h] y\|_1\Big) = \|y\|_2=1.$$
The concentration inequality \eqref{eqn:supGauss} for suprema of Gaussian processes now implies 
$$\bP\Big(\Big|\frac{1}{N} \sqrt{\frac{\pi}{2}} \|\Ga_g y\|_1 - 1\Big|\geq t\Big) \leq 2e^{-t^2/2\si_y^2},$$
where
\begin{equation*}
\si_y^2 = \frac{1}{N^2}\frac{\pi}{2} \sup_{z\in B_{\ell_{\infty}^{N}}} \|[\Gamma_{y_{[N]}} \ y_{N+1}\text{Id}_N]^*z\|_2^2.
\end{equation*}
For any $z\in B_{\ell_{\infty}^N}$, we obtain using \eqref{Gamma:opnorm} and the Cauchy-Schwarz inequality,
\begin{align*}
\|[\Gamma_{y_{[N]}} \ y_{N+1}\text{Id}_N]^*z\|_2^2 & = \|\Gamma_{y_{[N]}}^*z\|_2^2 + y_{N+1}^2 \ \|z\|_2^2 \\
& \leq (\|\Gamma_{y_{[N]}}\|_{\ell_2 \to \ell_2}^2 + y_{N+1}^2) \|z\|_2^2 \leq (s\|y_{[N]}\|_2^2 + y_{N+1}^2)N\leq sN.
\end{align*}
This completes the proof.
\end{proof}
By Lemmas~\ref{lem:FouHT} and \ref{lem:RIPfullRec}, our main recovery results in Theorems~\ref{thm:dirRecov} and \ref{thm:fullRecov} are implied by the following theorem. As before, $\theta$ consists of i.i.d.\ random selectors with mean $m/N$, $I=\{i\in [N] \ : \ \theta_i=1\}$ and $g,h$ are independent $n$-dimensional standard Gaussian vectors that are independent of $\theta$.
\begin{theorem}
\label{thm:mainNT}
Fix $\del>0$. Let $A=R_I\Gamma_g$ be a randomly subsampled Gaussian circulant matrix and let $B=R_I[\Gamma_g \ h]$. 
Under the assumptions on $s,m,N,\delta, \eta$ of Theorem~\ref{thm:main1}, $\frac{1}{m}\sqrt{\frac{\pi}{2}} A$ satisfies RIP$_{1,2}(s,\del)$ with probability at least $1-\eta$.  
Moreover, 
under the assumptions of Theorem~\ref{thm:main2} $\frac{1}{m}\sqrt{\frac{\pi}{2}} A$ and $\frac{1}{m}\sqrt{\frac{\pi}{2}}B$ satisfy RIP$_{1,2}^{\text{eff}}(s,\del)$ with probability at least $1-\eta$.
\end{theorem}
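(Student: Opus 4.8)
The plan is to obtain Theorem~\ref{thm:mainNT} from the pointwise estimates \eqref{eqn:Gammal2}--\eqref{eqn:Gammal1app} just proved, the Gaussian tools \eqref{eqn:H-W} and \eqref{eqn:supGauss}, and the uniform $\ell_2$-restricted isometry property of a partial Gaussian circulant matrix in the small-sparsity regime from \cite{MRW16} (or, up to extra logarithmic factors, \cite{KMR14}). Two preliminary reductions: since $Bx = R_I(\Ga_g x_{[N]} + x_{N+1}h) = R_I[\Ga_{x_{[N]}}\ x_{N+1}\mathrm{Id}_N][g,h]$, the matrix $B$ has the same structure as $A$ with $g$ replaced by the $2N$-dimensional Gaussian $[g,h]$ and with \eqref{Gamma:opnorm} replaced by its analogue from the proof of \eqref{eqn:Gammal1app}, so it suffices to analyse $A$; and $\operatorname{RIP}_{1,2}^{\text{eff}}(s,\del)$ follows from $\operatorname{RIP}_{1,2}(t,\del')$ with $t\asymp s/(\del')^2$, $\del'\asymp\del$, by splitting an effectively sparse $x$ into blocks $T_0,T_1,\dots$ of $t$ consecutive largest coordinates, using $\eu{x_{T_k}}\le\n{x_{T_{k-1}}}_1/\sqrt t$ so that $\sum_{k\ge 1}\eu{x_{T_k}}\le\sqrt{s/t}\lesssim\del$, and applying the $\ell_1$-triangle inequality together with $\operatorname{RIP}_{1,2}(t,\del')$ on each block --- this inflation of the sparsity level is what forces the worse $\del^{-4}$-type conditions of Theorem~\ref{thm:main2} (alternatively one runs the argument below directly over $\Si^{\text{eff}}_{s,N}$, whose $\ell_1$-geometry produces the same loss). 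It thus remains to bound, writing $a_i(x)$ for the $i$-th row of $\Ga_x$ (a cyclic shift of $x$, so $\eu{a_i(x)}=1$ and $\skp{a_i(x)}{g}=(\Ga_g x)_i$),
\[
E \;:=\; \sup_{x\in\Si_{s,N}}\Big|\tfrac1m\sqrt{\tfrac\pi2}\,\n{R_I\Ga_g x}_1-1\Big| \;\le\; \sup_{x}|S_1(x)|+\sup_{x}|S_2(x)|,
\]
where $S_1(x):=\tfrac{\sqrt{\pi/2}}{m}\sum_{i=1}^{N}(\theta_i-\tfrac mN)\,|\skp{a_i(x)}{g}|$ is the subsampling error and $S_2(x):=\tfrac{\sqrt{\pi/2}}{N}\n{\Ga_g x}_1-1$ is the Gaussian error.

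The term $\sup_x|S_2(x)|$ I would control by Gaussian chaining. Since $\n{\Ga_y}_{\ell_2\to\ell_1}\le\sqrt N\,\n{\Ga_y}_{\ell_2\to\ell_2}$ and $x-x'$ is $2s$-sparse, \eqref{Gamma:opnorm} gives that $g\mapsto S_2(x)-S_2(x')$ is Lipschitz with constant $\lesssim\sqrt{s/N}\,\eu{x-x'}$, and on $\Si_{s,N}$ this difference has mean zero (both endpoints having unit norm), so $S_2$ is a subgaussian process with respect to the metric $\sqrt{s/N}\,\eu{\cdot}$; hence $\bE\sup_x|S_2(x)|\lesssim\sqrt{s/N}\cdot\gamma_2(\Si_{s,N},\eu{\cdot})\lesssim\sqrt{s/N}\cdot\sqrt{s\log(eN/s)}$, which is $\le\del/2$ precisely under the small-sparsity hypothesis \eqref{s:cond}. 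The deviation of $\sup_x|S_2(x)|$ from its mean is controlled by Gaussian concentration (the Lipschitz version of \eqref{eqn:supGauss}) at scale $\exp(-c t^2 N/s)$, which is where the term $\del^2 N/\log(1/\eta)$ in \eqref{s:cond} enters; \eqref{eqn:Gammal1} is the pointwise instance of this.

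The core of the argument is the uniform bound on the subsampling error $\sup_x|S_1(x)|$, which must be established with \emph{linear} dependence on $s$: handling $\n{R_I\Ga_g x}_1$ as a Lipschitz function of $(g,\theta)$ jointly would cost a spurious factor $\sqrt s$ coming from $\n{\Ga_x}_{\ell_2\to\ell_2}\le\sqrt s$. Instead I would condition on $g$, so that $S_1(x)$ becomes a sum of independent, centered, bounded random variables in $\theta$. Working on the high-probability event on which, uniformly over $x\in\Si_{s,N}$, one has $\tfrac1N\n{\Ga_g x}_2^2\in[\tfrac12,2]$ (by \eqref{eqn:Gammal2} and a net) and $\n{\Ga_g x}_\infty\le\Lambda$, and on which $\tfrac1{\sqrt m}R_I\Ga_g$ has $\ell_2$-restricted isometry constant $\le\tfrac12$ on $2s$-sparse vectors \cite{MRW16}, the conditional variance of $S_1(x)$ equals $\tfrac{\pi}{2m^2}\,\tfrac mN(1-\tfrac mN)\sum_i|\skp{a_i(x)}{g}|^2\asymp\tfrac1m$ --- it is governed by $\tfrac1N\n{\Ga_g x}_2^2\asymp 1$, \emph{not} by the Lipschitz constant --- while the summands are bounded by $\tfrac{\sqrt{\pi/2}}{m}\n{\Ga_g x}_\infty\le\Lambda/m$, so Bernstein's inequality yields $\bP_\theta(|S_1(x)|\ge t)\le 2\exp(-c\,m\min\{t^2,t/\Lambda\})$ for each fixed $x$. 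A union bound over a $\rho$-net of $\Si_{s,N}$ of log-cardinality $\lesssim s\log(eN/(s\rho))$, together with the off-net estimate $|S_1(x)-S_1(x')|\le\tfrac1{\sqrt m}\n{R_I\Ga_g(x-x')}_2\le C\eu{x-x'}$ (using $\big||\skp{a_i(x)}{g}|-|\skp{a_i(x')}{g}|\big|\le|\skp{a_i(x-x')}{g}|$, the $2s$-sparsity of $x-x'$, and the $\ell_2$-RIP clause), then gives $\sup_x|S_1(x)|\le\del/2$ as soon as $m\min\{\del^2,\del/\Lambda\}\gtrsim s\log(eN/(s\del))+\log(1/\eta)$, after choosing $\rho\asymp\del$. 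Since $\Lambda$ can be taken of order $\log(s)\sqrt{\log N}$ (the size of the relevant $\ell_\infty$-type quantity, obtained by a net over $\Si_{s,N}$, a union bound over the $N$ coordinates, and the Gaussian maximal inequality --- this is also where the $\log(s)\log^{3/2}(N)$ factors originate), this reproduces exactly the regime $m\gtrsim\del^{-2}s\log(eN/(s\del))$ for $\del\le(\log(s)\sqrt{\log N})^{-1}$ and $m\gtrsim\del^{-1}s\log(s)\log^{3/2}(N)$ otherwise, i.e.\ conditions \eqref{eqn:condSp:small}--\eqref{eqn:condSp:large}, the further terms in the maximum in \eqref{eqn:condSp:large} being the $\log(1/\eta)$ contributions of the various events. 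Collecting all the failure probabilities and bounding their sum by $\eta$ finishes the $\operatorname{RIP}_{1,2}$ assertion; the $\operatorname{RIP}_{1,2}^{\text{eff}}$ assertions and the claim for $B$ then follow from the two preliminary reductions.

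I expect the main obstacle to be exactly this linear-in-$s$ control of $\sup_x|S_1(x)|$: a direct Lipschitz or chaining estimate loses the factor $\sqrt s$, so one must decouple $\theta$ from $g$ and exploit that, conditionally on $g$, the Bernoulli fluctuation has variance $\asymp m$ rather than $\asymp ms$. This is also what produces the sub-Gaussian/sub-exponential dichotomy in $\del$ seen in \eqref{eqn:condSp:small}--\eqref{eqn:condSp:large}, and it makes the uniform $\ell_2$-RIP of a partial Gaussian circulant matrix --- used both for the off-net step and for the conditional-variance control --- an indispensable ingredient, which in turn is the ultimate source of the small-sparsity restriction \eqref{s:cond}.
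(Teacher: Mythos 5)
Your global architecture matches the paper's: a net over the sparse sphere, a split into the Gaussian error $S_2$ (concentration of $\frac{1}{N}\sqrt{\pi/2}\,\|\Gamma_g y\|_1$, which is where the small-sparsity condition \eqref{s:cond} enters) and the subsampling error $S_1$, treated conditionally on $g$ so that the fluctuation in $\theta$ has variance $\asymp 1/m$ governed by $\frac1N\|\Gamma_g y\|_2^2$. But the quantitative core of your argument --- Bernstein's inequality with a sup-norm level $\Lambda$ --- does not deliver the stated conditions. The value you assert for $\Lambda$ is wrong: for unit net points $y$ the entries $(\Gamma_g y)_i$ are standard Gaussians, so the union bound you describe (over the $N$ coordinates and the $e^{cs\log(eN/(s\del))}$ net points) gives $\Lambda\asymp\sqrt{s\log(eN/(s\del))}$, not $\log(s)\sqrt{\log N}$; and no polylogarithmic bound can hold uniformly, since $\sup_{x\in\Si_{s,N}}\|\Gamma_g x\|_\infty$ is at least the $\ell_2$-norm of the $s$ largest entries of $g$, of order $\sqrt{s\log(N/s)}$. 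With the correct $\Lambda$, the subexponential branch of Bernstein forces $m\gtrsim \del^{-1}s^{3/2}\log^{3/2}(N)$ and pushes the sub-Gaussian/sub-exponential threshold down to $\del\asymp(s\log N)^{-1/2}$, so you recover neither \eqref{eqn:condSp:small} on its full range nor \eqref{eqn:condSp:large}. Relatedly, your appeal to a two-sided $\ell_2$-RIP constant $\le 1/2$ is not available under the hypotheses: \cite{MRW16} provides only the lower bound under $m\gtrsim s\log(eN/s)$, while a constant-level upper bound costs $m\gtrsim s\log^2(s)\log^2(N)$ by \cite{KMR14}, which exceeds \eqref{eqn:condSp:large} when $\del$ is not small.

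The missing idea is that no $\ell_\infty$-control of $\Gamma_g y$ is needed at all, and that the upper $\ell_2$-bound should be taken at a large, $\del$-dependent level rather than at a constant. The paper symmetrizes $X_y=\frac1m\sqrt{\pi/2}\,\|D_\theta\Gamma_g y\|_1-\frac1N\sqrt{\pi/2}\,\|\Gamma_g y\|_1$ with Rademacher signs and applies Hoeffding's inequality, whose variance proxy is $\frac1{m^2}\sum_i\theta_i|\langle a_i,y\rangle|^2=\frac1{m^2}\|Ay\|_2^2\le(1+\kappa)^2/m$ on the event $E_{\mathrm{RIP}}=\{\sup_{z\in\Si_{2s,N}}\frac1{\sqrt m}\|Az\|_2\le 1+\kappa\}$, the same event controlling the off-net error. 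Theorem~\ref{thm:RIP2upper} gives $E_{\mathrm{RIP}}$ under $m\gtrsim\kappa^{-2}s(\log^2(s)\log^2(N)+\log(1/\eta))$, and choosing $\kappa=1$ for small $\del$ and $\kappa=\sqrt{\del\log(s)}\log^{1/4}(N)$ for large $\del$ balances this against the Hoeffding requirement $m\gtrsim(1+\kappa)^2\del^{-2}(s\log(eN/(s\del))+\log(1/\eta))$; this trade-off, not an $\ell_\infty$ estimate, is the source of the $\log(s)\log^{3/2}(N)$ factor and of the dichotomy in $\del$. (Your sketch also skips the second symmetrization term $\sup_y\bP_\theta(|X_y|\ge\del)$ from Lemma~\ref{lem:symmetr}, which the paper handles via the median/Cantelli estimates of Lemma~\ref{lem:basic_facts}.) Finally, your reduction of $\mathrm{RIP}^{\mathrm{eff}}_{1,2}(s,\del)$ to $\mathrm{RIP}_{1,2}(cs/\del^2,\del)$ would only yield a $\del^{-3}$-type condition in the large-$\del$ regime of Theorem~\ref{thm:main2}; the paper instead re-runs the net argument over $\{\|x\|_1\le\sqrt s,\ \|x\|_2=1\}$, bounding the net size via Sudakov's inequality and the inclusion into $2\,\mathrm{conv}(\Si_{s,N})$, with a separate choice of $\kappa$.
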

\begin{proof}
Let $\kappa > 0$ be a number to be chosen later. Let $\cN_{\del/(1+\kappa)}\subset \Si_{s,N}$ be a minimal $\del/(1+\kappa)$-net for $\Si_{s,N}$ with respect to the Euclidean norm. It is well known, see e.g.\ \cite[Proposition C.3]{fora13}, that 
$$\log |\cN_{\del/(1+\kappa)}| \leq s\log\left(\frac{3(1+\kappa) eN}{s\del}\right).$$
Fix $x\in \Si_{s,N}$ and let $y\in \cN_{\del/(1+\kappa)}$ be such that $\|x-y\|_2\leq \del/(1+\kappa)$. 
We consider the events
\begin{equation}
\label{eqn:events}
\begin{split}
E_I & = \Big\{\frac{m}{2}\leq |I|\leq \frac{3m}{2}\Big\}  \\
E_{\text{RIP}} & =  \Big\{ \forall \ z \in \Si_{2s,N} \ : \ 
\frac{1}{\sqrt{m}}\|Az\|_2\leq 1+\kappa \Big\}   \\
E_{\Gamma,\ell_1} & = \Big\{\forall y\in \cN_{\del/(1+\kappa)} \ : \ \Big| \frac{1}{N}\sqrt{\frac{\pi}{2}} \|\Ga_g y\|_1 - 1\Big| \leq \del\Big\} \\
E & = \Big\{\forall y\in \cN_{\del/(1+\kappa)} \ : \  \Big|\frac{1}{m}\sqrt{\frac{\pi}{2}} \|Ay\|_1 - \frac{1}{N}\sqrt{\frac{\pi}{2}} \|\Ga_g y\|_1\Big| \leq 2\del\Big\} 
\end{split}
\end{equation}
respectively. Under $E_I$ and $E_{\text{RIP}}$, 
\begin{align*}
\Big|\frac{1}{m}\sqrt{\frac{\pi}{2}} \|Ax\|_1 - \frac{1}{m}\sqrt{\frac{\pi}{2}} 
\|Ay\|_1\Big| 
& \leq \frac{\del}{1+\kappa} \frac{1}{m}\sqrt{\frac{\pi}{2}} 
\Big\|A\Big(\frac{x-y}{\|x-y\|_2}\Big)\Big\|_1 \\
& \leq \frac{\del}{1+\kappa} \frac{|I|}{m}  \sup_{z\in \Si_{2s,N}} 
\frac{1}{\sqrt{|I|}}\sqrt{\frac{\pi}{2}} \|Az\|_2 \leq 
\sqrt{\frac{9\pi}{8}}\del.
\end{align*}
Therefore, if the events in \eqref{eqn:events} hold simultaneously, then by the triangle inequality 
\begin{align*}
& \Big|\frac{1}{m}\sqrt{\frac{\pi}{2}} \|Ax\|_1 - 1\Big| \nonumber\\
&  \qquad \leq \Big|\frac{1}{m}\sqrt{\frac{\pi}{2}} \|Ax\|_1 - \frac{1}{m}\sqrt{\frac{\pi}{2}} \|Ay\|_1\Big| + \Big|\frac{1}{m}\sqrt{\frac{\pi}{2}} \|Ay\|_1 - \frac{1}{N}\sqrt{\frac{\pi}{2}} \|\Ga_g y\|_1\Big| \\
& \qquad \qquad \qquad \qquad \qquad +  \Big| \frac{1}{N}\sqrt{\frac{\pi}{2}} \|\Ga_g y\|_1 - 1\Big|\leq (\sqrt{9\pi/8}+3)\del. \nonumber
\end{align*}
Hence, it remains to show that the events in \eqref{eqn:events} hold with probability at least $1-\eta$. The Chernoff bound immediately yields $\bP(E_I^c)\leq e^{-cm}$. 
By Theorem~\ref{thm:RIP2upper}, under the event $E_I$, if
\begin{equation}\label{m:cond-kappa}
m \gtrsim \kappa^{-2}s(\log^2(s)\log^2(N) + \log(1/\eta))
\end{equation}
then
\[
\bP_g(E_{\text{RIP}}^c)  = \bP_g\Big(\exists z\in \Si_{2s,N} \ : \ \frac{1}{\sqrt{|I|}}\|R_I\Gamma_g z\|_2 \geq 1+\kappa) \Big)
 \leq \eta
 \]
Moreover, by \eqref{eqn:Gammal1} and a union bound,
\begin{align*}
\bP\Big(\sup_{y\in \cN_{\del/(1+\kappa)}}\Big|\frac{1}{N} \sqrt{\frac{\pi}{2}} \|\Ga_g y\|_1 - 1\Big|\geq \del\Big) & \leq |N_{\del/(1+\kappa)}| 2e^{-\del^2 N/\pi s} \\
&\leq 2e^{s\log(3e(1+\kappa)N/(s\del)) - \del^2 N/\pi s}, 
\end{align*}
so $\bP(E_{\Gamma,\ell_1}^c) \leq 2\eta$ if 
\begin{equation}\label{N:cond}
N\geq \pi \del^{-2}s^2 \log(3eN(1+\kappa)/s\del) + \pi \del^{-2} s\log(1/\eta).
\end{equation}
Thus, it remains to show that $\bP(E^c)\leq \eta$.  
To prove this, we consider 
\[
X_y=\frac{1}{m}\sqrt{\frac{\pi}{2}} \|D_{\theta} \Ga_g y\|_1 - \frac{1}{N}\sqrt{\frac{\pi}{2}} \|\Ga_g y\|_1
\]
and $X_y'=\frac{1}{m}\sqrt{\frac{\pi}{2}}\|D_{\theta'} \Ga_g y\|_1 - 
\frac{1}{N}\sqrt{\frac{\pi}{2}} \|\Ga_g y\|_1$ for $y\in \cN_{\del}$, where 
$\theta'$ is an independent copy of $\theta$. By symmetrization, see 
Lemma~\ref{lem:symmetr},
\begin{align*}
\bP_{\theta}\Big(\sup_{y\in \cN_{\del/(1+\kappa)}} |X_y|\geq 2\del\Big) \leq \bP_{\theta}\Big(\sup_{y\in \cN_{\del/(1+\kappa)}} |X_y-X_y'|\geq \del\Big) 
 + \sup_{y\in \cN_{\del/(1+\kappa)}} \bP_{\theta}(|X_y|\geq \del)
\end{align*}
and so 
\begin{align}
\label{eqn:symm}
&\bP_{\theta,g}\Big(\sup_{y\in \cN_{\del/(1+\kappa)}} |X_y|\geq 2\del\Big) \leq \bP_{\theta,g}\Big(\sup_{y\in \cN_{\del/(1+\kappa)}} |X_y-X_y'|\geq \del\Big) \\
& \qquad\qquad + \E_g \sup_{y\in \cN_{\del/(1+\kappa)}} \bP_{\theta}(|X_y|\geq \del). \nonumber
\end{align}
To bound the first term on the right hand side, observe that $X_y-X_y'$ and  
$$\frac{1}{m}\sqrt{\frac{\pi}{2}} \sum_{i=1}^N \eps_i (\theta_i-\theta_i') 
|\langle a_i,y\rangle|,$$
where $\eps$ is a Rademacher vector, i.e., a vector of independent random signs, 
that is independent of $g,\theta$ and $\theta'$,  are identically distributed 
and so
\begin{align*}
& \bP_{\theta,\theta',g}\Big(\sup_{y\in \cN_{\del/(1+\kappa)}} |X_y-X_y'|\geq 
\del\Big) \\
& \qquad = \bP_{\eps,\theta,\theta',g}\Big(\frac{1}{m}\sqrt{\frac{\pi}{2}} 
\sup_{y\in \cN_{\del/(1+\kappa)}}\Big|\sum_{i=1}^N \eps_i (\theta_i-\theta_i') 
|\langle a_i,y\rangle|\Big| \geq \del\Big) \\
& \qquad \leq 2\bP_{\eps,\theta,g}\Big(\frac{1}{m}\sqrt{\frac{\pi}{2}} \sup_{y\in \cN_{\del/(1+\kappa)}}\Big|\sum_{i=1}^N \eps_i \theta_i |\langle a_i,y\rangle|\Big| \geq \del/2\Big)\\
& \qquad \leq 2 \bP_{\eps,\theta,g}\Big(\frac{1}{m}\sqrt{\frac{\pi}{2}} \sup_{y\in \cN_{\del/(1+\kappa)}}\Big|\sum_{i=1}^N \eps_i \theta_i 1_{E_{\text{RIP}}} |\langle a_i,y\rangle|\Big| \geq \del/2\Big) + \bP_{\theta,g}(E_{\text{RIP}}^c).
\end{align*}
By Hoeffding's inequality and assuming $E_I$, 
\begin{align*}
\bP_{\eps}\Big(\frac{1}{m}\sqrt{\frac{\pi}{2}}\Big|\sum_{i=1}^N \eps_i \theta_i 1_{E_{\text{RIP}}} |\langle a_i,y\rangle|\Big| \geq \del/2\Big) & \leq 2\exp\Big(-\frac{m^2 \del^2}{2\pi \sum_{i=1}^N \theta_i 1_{E_{\text{RIP}}} |\langle a_i,y\rangle|^2}\Big) \\
& = 2\exp\Big(-\frac{m \del^2}{2\pi1_{E_{\text{RIP}}} \frac{1}{m} \|Ay\|_2^2}\Big) \leq 2e^{-\frac{m\del^2}{2\pi(1+\kappa)^2}} 
\end{align*}
Hence a union bound 
yields
$$\bP_{\eps,\theta,g}\Big(\frac{1}{m}\sqrt{\frac{\pi}{2}} \sup_{y\in \cN_{\del/(1+\kappa)}}\Big|\sum_{i=1}^N \eps_i \theta_i 1_{E_{\text{RIP}}} |\langle a_i,y\rangle|\Big| \geq \del/2\Big) \leq 2|\cN_{\del/(1+\kappa)}|e^{-\frac{m\del^2}{2\pi(1+\kappa)^2}} \leq \eta$$
provided that
\begin{equation}\label{cond:m2}
m \gtrsim \frac{(1+\kappa)^2}{\delta^{2}} \left(s \log\left(\frac{3e(1+\kappa)N}{\delta s}\right) + \log(1/\eta)\right).  
\end{equation}
To bound the second term on the right hand side of \eqref{eqn:symm}, consider the event
$$E_{\Gamma,\ell_2} = \Big\{\forall y\in \cN_{\del/(1+\kappa)} \ : \ \frac{1}{\sqrt{N}} \|\Ga_g y\|_2 \leq 2\Big\}.$$
By \eqref{eqn:Gammal2} and a union bound, 
$$\bP_g(E_{\Gamma,\ell_2}^c) \leq 2 |\cN_{\del/(1+\kappa)}| e^{-cN/s} \leq \eta$$
under the condition $N \gtrsim \delta^{-2} s^2 \log\left( \frac{3e(1+\kappa) 
N}{s\delta}\right) + s \log(1/\eta)$, which is weaker than \eqref{N:cond}.
This shows that 
$$\E_g \sup_{y\in \cN_{\del/(1+\kappa)}} \bP_{\theta}(|X_y|\geq \del) \leq \E_g \sup_{y\in \cN_{\del/(1+\kappa)}} \bP_{\theta}(|X_y 1_{E_{\Gamma,\ell_2}}|\geq \del) + \eta.$$
Now recall the following facts. If $X$ is a random variable, $X'$ is an 
independent copy, and $\text{med}(X)$ is a median of $X$, then for any $\del>0$ 
(see Lemma \ref{lem:basic_facts})
$$\bP(|X-\operatorname{med}(X)|\geq \del) \leq 2\bP(|X-X'|\geq \del)$$
and 
$$|\operatorname{med}(X)-\E X| \leq (\E(X-\E X)^2)^{1/2}.$$
Combining these, we find
$$\bP(|X-\E X|\geq \del) \leq 2\bP(|X-X'|\geq \del - (\E(X-\E X)^2)^{1/2}).$$
We apply these inequalities with $X=X_y 1_{E_{\Gamma,\ell_2}}$ and $X'=X_y' 1_{E_{\Gamma,\ell_2}}$. Note that $\E_{\theta} X=0$.
By symmetrization, see e.g.\ \cite[Lemma 6.3]{LeT91} or \cite[Lemma 8.4]{fora13},
\begin{align*}
&(\E_{\theta}(X-\E_{\theta} X)^2)^{1/2}) 
= \Big(\E_{\theta} \Big|\frac{1}{m}\sqrt{\frac{\pi}{2}} \|Ay\|_1 - \E_{\theta}\Big(\frac{1}{m}\sqrt{\frac{\pi}{2}} \|Ay\|_1\Big)\Big|^2\Big)^{1/2}1_{E_{\Gamma,\ell_2}} \\
& \qquad 
\leq \frac{2}{m}\sqrt{\frac{\pi}{2}} \Big(\E_{\theta,\eps}\Big|\sum_{i=1}^N \eps_i \theta_i |\langle a_i,y\rangle| \Big|^2\Big)^{1/2}1_{E_{\Gamma,\ell_2}} \\
& \qquad 
= \frac{2}{m}\sqrt{\frac{\pi}{2}} \Big(\E_{\theta}\sum_{i=1}^N \theta_i |\langle a_i,y\rangle|^2\Big)^{1/2}1_{E_{\Gamma,\ell_2}}  
= \frac{1}{\sqrt{m}} \sqrt{2\pi} \frac{1}{\sqrt{N}} \|\Ga_g y\|_2 1_{E_{\Gamma,\ell_2}} \\
&\qquad \leq \frac{2\sqrt{2\pi}}{\sqrt{m}} \leq \del/2 
\end{align*}
if $m\geq 32\pi \del^{-2}$ the latter being a weaker condition than 
\eqref{cond:m2}. 
In summary, we find
\begin{align*}
\bP_{\theta}(|X_y 1_{E_{\Gamma,\ell_2}}|\geq \del) & \leq 2\bP(|X_y-X_y'|1_{E_{\Gamma,\ell_2}}\geq \del/2) \\
 & \leq 4\bP_{\theta,\eps}\Big(\frac{1}{m}\sqrt{\frac{\pi}{2}} \sum_{i=1}^N \eps_i \theta_i |\langle a_i,y\rangle|1_{E_{\Gamma,\ell_2}} \geq \frac{\del}{4}\Big).
\end{align*}
Now apply Hoeffding's inequality to obtain
\begin{align*}
\bP_{\theta}(|X_y 1_{E_{\Gamma,\ell_2}}|\geq \del) & \leq 4\E_{\theta}\exp\Big(\frac{-m^2\frac{2}{\pi}\del^2}{16\sum_{i=1}^N \theta_i |\langle a_i,y\rangle|^2}\Big) 
& = 4\E_{\theta}\exp\Big(\frac{-m \del^2}{8\pi \frac{1}{m} \|Ay\|_2^2}\Big) \\
& \leq 4e^{-\frac{m\del^2}{8\pi(1+\kappa)^2}} + 4\bP_{\theta}(E_{\text{RIP}}^c). 
\end{align*}
If $m\geq 8\pi \frac{(1+\kappa)^2}{\del^{2}}\log(1/\eta)$ (which is again weaker than \eqref{cond:m2}), we find
$$\E_g \sup_{y\in N_{\del}} \bP_{\theta}(|X_y|\geq \del)\leq 5\eta +4 \bP_{\theta,g}(E_{\text{RIP}}^c)\leq 9\eta.$$
We still need to choose $\kappa > 0$ and distinguish two cases to this end. 

{\bf Case 1:} Assume that $0 < \delta < \big(\log(s) \sqrt{\log(N)}\big)^{-1}$ and choose $\kappa = 1$.
A nontrivial $s \geq 1$ is only allowed by 
\eqref{s:cond} if $\delta \gtrsim 1/\sqrt{N}$. In this situation we have
$\log(3e(1+\kappa)N/(s\delta)) \asymp \log(N)$.
Then \eqref{s:cond} implies \eqref{N:cond}. Moreover,  \eqref{eqn:condSp:small} implies both \eqref{cond:m2} and \eqref{m:cond-kappa} for our conditions on the parameters $\kappa$ and $\delta$.
Altogether, we obtain $\bP(E^c)\leq 10\eta$ in this case.

{\bf Case 2:} If $\big(\log(s) \sqrt{\log(N)}\big)^{-1} < \delta \leq 1$ we choose $\kappa = \sqrt{\delta \log(s)} \log^{1/4}(N) > 1$.
Again a nontrivial $s \geq 1$ implies $\delta \gtrsim 1/\sqrt{N}$ by \eqref{s:cond} and also in this case we have
$\log(3e(1+\kappa)N/(s\delta)) \asymp \log(N)$. Plugging our choice of $\kappa$ into \eqref{cond:m2} and 
\eqref{m:cond-kappa}, we observe that both these conditions are implied by \eqref{eqn:condSp:large} and $\bP(E^c)\leq 10\eta$.\par

\medskip

The proof of the second statement for $A$ is similar, so we only indicate the necessary changes in the argument. Let us write $C_{s,N} = \{x\in \R^N \ : \ \|x\|_1\leq \sqrt{s}, \ \|x\|_2 = 1\}$. It clearly suffices to show that
$$\sup_{x\in C_{s,N}} \Big|\frac{1}{m}\sqrt{\frac{\pi}{2}} \|Ax\|_1 - 1\Big|\leq \del$$
with probability at least $1-\eta$. Let us first recall that $C_{s,N}\subset 2\text{conv}(\Si_{s,N})$ \cite[Lemma 3.1]{PlV13lin}. Hence, under $E_{\text{RIP}}$, 
$$\frac{1}{\sqrt{m}}\sup_{z\in C_{s,N}}\|Az\|_2 \leq 2\frac{1}{\sqrt{m}}\sup_{z\in \Si_{s,N}}\|Az\|_2\leq 2(1+\kappa).$$
We repeat the above argument with $\cN_{\del/(1+\kappa)}$ replaced by a minimal $\del/(1+\kappa)$-net of $C_{s,N}$. Using $C_{s,N}\subset 2\text{conv}(\Si_{s,N})$ and Sudakov's inequality, Theorem~\ref{thm:Sudakov}, 
we find 
\begin{align}
\label{eqn:appSud}
\log|\cN_{\del/(1+\kappa)}| & \lesssim \frac{(1+\kappa)^2}{\del^{2}} \Big(\E\sup_{x\in C_{s,N}}\langle g,x\rangle\Big)^2 \leq \frac{4(1+\kappa)^2}{\del^{2}} \Big(\E\sup_{x\in \Si_{s,N}}\langle g,x\rangle\Big)^2\\
& \lesssim \frac{4(1+\kappa)^2}{\del^{2}} s\log(eN/s),
\end{align}
where the final inequality is \cite[Lemma 2.3]{PlV13}. By now chasing through the argument above we arrive at the three conditions
\begin{equation}\label{mN:strengthened}
\begin{split}
 N & \gtrsim \frac{(1+\kappa)^2}{\delta^4} s^2 \log(eN/s) + s 
\frac{1}{\delta^2}  \log(1/\eta)\\
 m & \gtrsim \kappa^{-2} s ( \log^2(s) \log^2(N) + \log(1/\eta)) \\
 m & \gtrsim \frac{(1+\kappa)^4}{\delta^4} s \log(eN/s) + \frac{(1+\kappa)^2}{\delta^2} \log(1/\eta).
 \end{split}
\end{equation}
Again, we distinguish two cases depending on $\delta$ and choose $\kappa$ as
\[
\kappa = \left\{ \begin{array}{ll} 1 & \mbox{ if } 0 < \delta \leq (\log^2(s) \log(N) )^{-1/4}, \\
( \delta^4 \log^2(s) \log(N))^{1/6} & \mbox{ if } (\log^2(s) \log(N) )^{-1/4} < \delta \leq 1.
\end{array} \right.
\]
With this we can deduce the statement of the theorem (noting also that $\log(s) \leq \log(N)$).\par

Finally, let us prove the second statement for $B$. Let $N_{\del/(1+\kappa)}$ be a minimal $\del/(1+\kappa)$-net of $C_{s,N+1}$. By the first part of the proof, it is readily seen that the result will follow once we show that the events
\begin{equation}
\label{eqn:eventsGaExt}
\begin{split}
E_{\text{RIP},B} & =  \Big\{ \forall \ z \in \Si_{2s,N+1} \ : \ 
\frac{1}{\sqrt{m}}\|Bz\|_2\leq 3+\kappa\Big\}   \\
E_{\Gamma,h,\ell_2} & = \Big \{\forall y\in N_{\del} \ : \ \frac{1}{\sqrt{N}} 
\|[\Ga_g \ h] y\|_2 \leq 4 \Big\}\\
E_{\Gamma,h,\ell_1} & = \Big\{\forall y\in N_{\del} \ : \ \Big| 
\frac{1}{N}\sqrt{\frac{\pi}{2}} \|[\Ga_g \ h] y\|_1 - 1\Big| \leq \del \Big\} 
\end{split}
\end{equation}
hold with probability at least $1-c \eta$. For $E_{\Gamma,h,\ell_1}$ this is immediate from \eqref{eqn:Gammal1app} and a union bound. For $E_{\text{RIP},B}$, observe that 
$$\frac{1}{\sqrt{m}}\|Bz\|_2 \leq \frac{1}{\sqrt{m}} \|Az_{[N]}\|_2 + 
|z_{N+1}| \ \frac{1}{\sqrt{m}}\sqrt{\frac{\pi}{2}}\|D_{\theta} h\|_2.$$
We have already seen that the event $E_I=\{\frac{m}{2}\leq |I| \leq \frac{3m}{2}\}$ holds with probability $1-\eta$. Under this event, the Hanson-Wright inequality \eqref{eqn:H-W} yields
$$\bP_g\Big(\frac{1}{m}\|D_{\theta}h\|_2^2 \geq 2\Big) \leq 
\bP_g\Big(\frac{1}{|I|}\|D_{\theta}h\|_2^2 \geq \frac{4}{3}\Big) \leq 
e^{-c|I|}\leq e^{-m/2}\leq \eta$$
for $m\gtrsim \log(1/\eta)$. Under the event $E_{RIP}$ we have
$$\frac{1}{\sqrt{m}} \sup_{z \in \Si_{2s,N}}\|Az_{[N]}\|_2 \leq 1+\kappa$$
with probability $1-\eta$, so that, with probability at least $1-2\eta$,
$$\frac{1}{\sqrt{m}}\|Bz\|_2 \leq (1+\kappa) \| z_{[N]}\|_2 + 2 |z_{N+1}| \leq 
(3+\kappa) \|z\|_2$$
under the conditions \eqref{mN:strengthened}. 
Very similarly, one shows that $E_{\Gamma,h,\ell_2}$ holds with probability at least $1-\eta$ under \eqref{mN:strengthened}.
As before, distinguishing two cases for $\delta$ one arrives at the statement of the theorem.  
\end{proof}

\section{Further applications}
\label{sec:appl}

Apart from its usefulness for one-bit compressed sensing, the RIP$_{1,2}$-property is of interest for (unquantized) outlier robust compressed sensing \cite{DLR16} and for compressed sensing involving uniformly scalar quantized measurements \cite{JaC16,MJC16}. In this section, we briefly sketch the implications of Theorem~\ref{thm:mainNT} for these two directions. 
\begin{corollary}
\label{cor:BPDN1}
Let $A=R_I\Gamma_g$ be a randomly subsampled Gaussian circulant matrix. Let $0 < \eta < 1$ and $s \in [N]$ such that
\[
s \lesssim \min\left\{\sqrt{N/\log^2(N)}, N/\log(1/\eta)\right\}
\]
and suppose that
\begin{equation}\label{m:l1}
m \gtrsim s \max\left\{\log^{4/3}(s) \log^{5/3}(N), \frac{\log(1/\eta)}{\log^{3/2}(s) \log^{1/3}(N)}, \frac{\log^{2/3}(s) \log^{1/3}(N)}{s}\right\}.
\end{equation}
Then, with probability exceeding $1-\eta$ the following holds: for any $x\in \C^n$ and $y=Ax+e$, where $\|e\|_1\leq \eps$, any solution $x^{\#}$ to
\begin{equation*}
\min_{z\in \C^n} \|z\|_1 \quad \mbox{ subject to } \quad
\|y-Az\|_1\leq \eps.
\end{equation*}
satisfies
$$\|x-x^{\#}\|_2\lesssim \frac{\si_s(x)_1}{\sqrt{s}} + \frac{\eps}{m}.$$
\end{corollary}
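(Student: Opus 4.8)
The plan is to derive the corollary from the RIP$_{1,2}$-estimates of Theorem~\ref{thm:mainNT} by the route of \cite{DLR16}: an $\ell_1/\ell_2$-RIP yields a null space property that is robust with respect to $\ell_1$-measured noise, which in turn gives outlier-robust recovery via $\ell_1$-minimization with an $\ell_1$ data fidelity constraint. \emph{Step 1 (instantiate the RIP).} Apply the second statement of Theorem~\ref{thm:mainNT} with $\delta$ replaced by a small absolute constant $\delta_0$ and $s$ replaced by $t:=C_0 s$ for a suitable absolute constant $C_0$. For $s,N$ large the pair $(t,\delta_0)$ falls in the second regime $(\log^2(t)\log(N))^{-1/4}<\delta_0\le1$ of Theorem~\ref{thm:main2}, and the hypotheses there on $s$ and $m$ collapse — up to constants absorbed in $\lesssim$ and using $\log(t)\asymp\log(s)$ — precisely to the sparsity assumption and \eqref{m:l1} of the corollary. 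Hence, with probability at least $1-\eta$, the matrix $B:=\frac1m\sqrt{\pi/2}\,A$ satisfies $\operatorname{RIP}_{1,2}^{\text{eff}}(t,\delta_0)$; in particular $\frac1m\sqrt{\pi/2}\,\|Az\|_1$ lies within a factor $1\pm\delta_0$ of $\|z\|_2$ for every real $t$-sparse $z$, and (since $A$ is real) within factors $\tfrac{1-\delta_0}{\sqrt2}$ and $\sqrt2(1+\delta_0)$ for every complex $t$-sparse $z$, by splitting into real and imaginary parts.

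\emph{Step 2 (robust null space property).} Fix $v\in\C^N$ and $S\subset[N]$ with $|S|\le s$. As in the proof of Lemma~\ref{lem:RIPfullRec}, let $T_0\supseteq S$ with $|T_0|=t$ be obtained by adjoining to $S$ the largest-magnitude entries of $v_{S^c}$, and split $[N]\setminus T_0$ into consecutive blocks $T_1,T_2,\dots$ of size $t$ in decreasing order of magnitude, so that $\|v_{T_k}\|_2\le t^{-1/2}\|v_{T_{k-1}}\|_1$ for $k\ge1$ and hence $\sum_{k\ge1}\|v_{T_k}\|_2\le\sqrt{s/t}\,\|v_S\|_2+t^{-1/2}\|v_{S^c}\|_1$. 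Applying the lower RIP bound of Step 1 to the $t$-sparse $v_{T_0}=v-\sum_{k\ge1}v_{T_k}$ and the upper bound to each $v_{T_k}$, then substituting the last inequality and absorbing the resulting multiple of $\|v_S\|_2$ into the left-hand side — which is possible once $C_0$ is chosen large enough that $\tfrac{2(1+\delta_0)}{1-\delta_0}\sqrt{s/t}<1$ — one obtains the $\ell_2$-robust null space property of order $s$ with respect to $\|\cdot\|_1$:
\[\|v_S\|_2\le\frac{\rho}{\sqrt s}\,\|v_{S^c}\|_1+\frac{\tau}{m}\,\|Av\|_1,\qquad\text{with }\rho\in(0,1),\ \tau\lesssim1,\]
valid for all $v\in\C^N$ and all $|S|\le s$.

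\emph{Step 3 (recovery estimate).} Let $x\in\C^N$, $y=Ax+e$ with $\|e\|_1\le\eps$, and let $x^{\#}$ solve the program. Then $x$ is feasible, so $\|A(x-x^{\#})\|_1\le\|Ax-y\|_1+\|y-Ax^{\#}\|_1\le2\eps$, and $\|x^{\#}\|_1\le\|x\|_1$ by optimality. Feeding these facts and the robust null space property of Step 2 into the standard chain of estimates for $\ell_2$-robust null space recovery, see \cite[Section 4.3]{fora13}, gives
\[\|x-x^{\#}\|_2\lesssim\frac{\sigma_s(x)_1}{\sqrt s}+\frac{1}{m}\|A(x-x^{\#})\|_1\lesssim\frac{\sigma_s(x)_1}{\sqrt s}+\frac{\eps}{m},\]
which is the claim.

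The parameter bookkeeping of Step 1 and the textbook argument of Step 3 are routine; the only substantive point is Step 2. The subtlety there is that, unlike the $\ell_2$-RIP, an $\ell_1/\ell_2$-RIP with small constant does not by itself deliver a null space property with constant $\rho<1$: one must manufacture the gain $\sqrt{s/t}<1$ by working at the inflated sparsity level $t\asymp s$ (which is exactly why Theorem~\ref{thm:mainNT} is invoked at order $C_0 s$ rather than $s$), and this same margin must also absorb the $\sqrt{2}$-losses incurred when passing from real to complex vectors.
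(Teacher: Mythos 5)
Your argument is correct (up to routine constants and the real/complex bookkeeping you indicate), but it is not the route the paper takes. The paper's proof is essentially a citation: it invokes \cite[Theorem III.3]{DLR16}, which states that the \emph{one-sided} condition $\frac{1}{m}\|Av\|_1 \geq c\|v\|_2$ for all $v\in\Si_{s,N}^{\text{eff}}$ already implies stable and robust recovery with error $\sigma_s(x)_1/\sqrt{s}+\eps/m$ for the $\ell_1$-fidelity-constrained $\ell_1$-minimization, and then obtains this lower bound directly from Theorem~\ref{thm:mainNT} by taking $\delta$ equal to a fixed constant $c$ — the hypotheses of the corollary are precisely those of Theorem~\ref{thm:main2} with $\delta$ constant. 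You instead rebuild the bridge yourself: from the \emph{two-sided} RIP$_{1,2}$ at the inflated sparsity level $t=C_0 s$ you derive an $\ell_2$-robust null space property with $\ell_1$-measured noise term (lower bound on the head block $T_0$, upper bound on the tail blocks, the gain $\sqrt{s/t}$ absorbing the head term), and then feed it into the standard machinery of \cite{fora13}. What your route buys is self-containedness — it uses only results stated in this paper plus textbook NSP arguments — at the cost of needing the upper RIP$_{1,2}$ bound and the sparsity-inflation constant $C_0$, neither of which the paper's citation-based argument requires; indeed, dispensing with the upper bound is exactly the point of \cite{DLR16}. Two minor remarks: your claim that $(t,\delta_0)$ always lands in the second regime of Theorem~\ref{thm:main2} fails for small $s,N$ (where $\log^2(t)\log(N)\le\delta_0^{-4}$), but there the first-regime condition is absorbed into the implicit constants, matching the paper's own level of rigor; and the complex-vector issue you patch by splitting real and imaginary parts is silently handled in the paper by the citation (the statement's $\C^n$ appears to be inherited from \cite{DLR16}).
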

\begin{proof}
As is argued in the proof of \cite[Theorem III.3]{DLR16}, it suffices to show that with probability at least $1-\eta$,  
\begin{equation}
\frac{1}{m}\|Ax\|_1 \geq c\|x\|_2,\qquad \text{for all} \ x\in \Si_{s,N}^{\text{eff}}.
\end{equation}
for a universal constant $c>0$. Hence the result immediately follows from Theorem~\ref{thm:mainNT} by choosing 
$\delta = c$ constant.
\end{proof}
Note that after estimating $\log(s) \leq \log(N)$ and for inverse polynomial probability $\eta = N^{-2}$, say, condition \eqref{m:l1} takes
the simpler form
\[
m \gtrsim s \log^3(N).
\] 

In addition, we can use Theorem~\ref{thm:mainNT} to derive the following reconstruction result involving a uniform scalar quantizer that uses additional dithering. Let $Q_{\del}:\R^m\rightarrow (\del\Z + \del/2)^m$ be the uniform scalar quantizer with resolution $\del$ defined by $Q_{\del}(z) = \big(\del\lfloor z_i/\del\rfloor + \del/2\big)_{i=1}^m$. 
\begin{theorem}
\label{thm:USC}
Let $A=R_I\Gamma_g$ be a randomly subsampled Gaussian circulant matrix. Let $\tau$ be a vector of $m$ independent $\mathcal{N}(0,\pi R^2/2)$-distributed random variables. Suppose that $u$ is a vector of $m$ independent random variables that are uniformly distributed on $[0,\del]$ and are independent of $\tau$. Assume that, for $0<\eta, \epsilon \leq1$,
\begin{equation}\label{eqn:condUSC}
\begin{split}
s &\lesssim \min\left\{\sqrt{N/\log^2(N)}, N/\log(1/\eta)\right\} \\
m &\gtrsim s \max\left\{\log^{4/3}(s) \log^{5/3}(N), \frac{\log(1/\eta)}{\log^{3/2}(s) \log^{1/3}(N)}, \frac{\log^{2/3}(s) \log^{1/3}(N)}{s},\right.\\
& \phantom{\gtrsim s \max\big\{\;} \left. R^2 \delta^{-2} \epsilon^{-6} 
\log(1/\eta), \frac{\log(1/\eta)}{\epsilon^2 s}\right\}.
\end{split}  
\end{equation}
Then, with probability at least $1-\eta$ the following holds: for any $x\in \R^N$ with $\|x\|_1\leq \sqrt{s}\|x\|_2$ and $\|x\|_2\leq R$, any solution $x^{\#}$ to the program
\begin{equation}
\label{eqn:CPUSC}
\min\|z\|_1 \ \ \ \operatorname{s.t.} \ \ \ \|z\|_2\leq R, \ Q_{\del}\Big(\sqrt{\frac{\pi}{2}} Az + \tau + u\Big) = Q_{\del}\Big(\sqrt{\frac{\pi}{2}} Ax + \tau + u\Big)
\end{equation}
satisfies $\|x^{\#}-x\|_2\lesssim \del \epsilon$.
\end{theorem}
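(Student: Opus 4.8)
The plan is to run the reduction already used for Theorem~\ref{thm:fullRecov} and then to quote the dithered‑quantization theory of \cite{JaC16}. First absorb the Gaussian dither into the augmented circulant matrix $B=R_I[\Ga_g\ h]$: since $\tau_i\sim\cN(0,\pi R^2/2)$ we may realize $\tau=R\sqrt{\tfrac{\pi}{2}}R_I h$ with $h$ a standard Gaussian vector independent of $g,\theta,u$, so that for every $z\in\R^N$
\[
\sqrt{\tfrac{\pi}{2}}Az+\tau=\sqrt{\tfrac{\pi}{2}}\,B[z,R]=m\,C[z,R],\qquad C:=\tfrac1m\sqrt{\tfrac{\pi}{2}}B .
\]
Using the scaling identity $Q_\delta(mv)=m\,Q_{\delta/m}(v)$, the constraint in \eqref{eqn:CPUSC} becomes $Q_{\hat\delta}(C[z,R]+\hat u)=Q_{\hat\delta}(C[x,R]+\hat u)$ with $\hat\delta:=\delta/m$ and $\hat u:=u/m$ uniform on $[0,\hat\delta]^m$, while the objective and $\|z\|_2\le R$ say we minimize $\|\cdot\|_1$ over $s$‑effectively sparse vectors in $RB_{\ell_2^N}$. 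Exactly as in the proof of Lemma~\ref{lem:RIPfullRec}, optimality of $x^\#$ and $s$‑effective sparsity of $x$ show that $[x,R]$ and $[x^\#,R]$ are $(\sqrt{s}+1)^2$‑effectively sparse with $\ell_2$‑norm in $[R,\sqrt{2}R]$, their difference and midpoint stay effectively sparse at level $\lesssim(\sqrt{s}+1)^2$, and the renormalization inequality \eqref{eqn:normalization} gives $\|x-x^\#\|_2\le 2R\|\bar x-\bar x^\#\|_2$ for the normalized lifts $\bar x=[x,R]/\|[x,R]\|_2$. Hence it is enough to bound $\|\bar x-\bar x^\#\|_2$ by $c\,\delta\epsilon/R$.

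\textbf{The two ingredients behind \eqref{eqn:condUSC}.} With $\delta$ replaced by a small absolute constant $\delta_0$, the sparsity bound and the first three terms in the lower bound for $m$ in \eqref{eqn:condUSC} are precisely the hypotheses of Theorem~\ref{thm:main2}; applying Theorem~\ref{thm:mainNT} with its sparsity parameter taken to be $36(\sqrt{s}+1)^2\asymp s$ (adjusting absolute constants) gives that, with probability at least $1-\eta/2$, the matrix $C$ satisfies $\operatorname{RIP}^{\text{eff}}_{1,2}(36(\sqrt{s}+1)^2,\delta_0)$, so $\|Cv\|_1\asymp\|v\|_2$ uniformly over all the effectively sparse vectors above. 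The remaining two terms $R^2\delta^{-2}\epsilon^{-6}\log(1/\eta)$ and $\epsilon^{-2}s^{-1}\log(1/\eta)$ are the extra measurements needed by \cite{JaC16} to upgrade this fixed‑constant $\ell_1/\ell_2$‑embedding into a quantitative statement about consistent reconstruction from $\hat\delta$‑quantized, uniformly dithered measurements. The mechanism is that uniform dithering makes $Q_{\hat\delta}$ unbiased in $\ell_1$: $\E_u|Q_{\hat\delta}(a+u_i)-Q_{\hat\delta}(b+u_i)|=|a-b|$, so $\E_{\hat u}\|Q_{\hat\delta}(C[a,R]+\hat u)-Q_{\hat\delta}(C[b,R]+\hat u)\|_1=\|C([a,R]-[b,R])\|_1\asymp\|a-b\|_2$, while a bounded‑differences estimate over $\hat u$ (each coordinate changes the value by $\le 2\hat\delta$) together with the covering‑number bound \eqref{eqn:appSud} for effectively sparse sets (using $\E\sup_{x\in\Si_{s,N}}\langle g,x\rangle\lesssim\sqrt{s\log(eN/s)}$) controls the fluctuation. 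The delicate point, where \cite{JaC16} does genuine work, is that $(a,b)\mapsto\|Q_{\hat\delta}(C[a,R]+\hat u)-Q_{\hat\delta}(C[b,R]+\hat u)\|_1$ is discontinuous — small perturbations flip quantization cells — so one cannot net the pair naively; resolving this costs the extra powers of $\epsilon$, and the factor $R^2\delta^{-2}$ reflects the number $\sim R/\delta$ of cells that a measurement vector of scale $\sim R$ spans at resolution $\hat\delta$.

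\textbf{Conclusion.} Feeding the $\operatorname{RIP}^{\text{eff}}_{1,2}$‑property of $C$ into the estimate of \cite{JaC16} yields, with probability at least $1-\eta/2$, the uniform lower bound
\[
\tfrac1m\sqrt{\tfrac{\pi}{2}}\big\|Q_\delta(\sqrt{\tfrac{\pi}{2}}Aa+\tau+u)-Q_\delta(\sqrt{\tfrac{\pi}{2}}Ab+\tau+u)\big\|_1\;\ge\;c_1\,R\,\|\bar a-\bar b\|_2-c_2\,\delta\epsilon
\]
simultaneously for all $s$‑effectively sparse $a,b\in RB_{\ell_2^N}$, where $\bar a,\bar b$ are the normalized lifts. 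Applying this with $a=x$, $b=x^\#$, the consistency constraint of \eqref{eqn:CPUSC} forces the left‑hand side to vanish, so $\|\bar x-\bar x^\#\|_2\lesssim\delta\epsilon/R$; the inequality of the first paragraph then gives $\|x-x^\#\|_2\le 2R\|\bar x-\bar x^\#\|_2\lesssim\delta\epsilon$. A union bound over the two events of probability $\ge1-\eta/2$ finishes the proof.

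\textbf{Main obstacle.} I expect the crux to be the second ingredient: the embedding theorem of \cite{JaC16} is stated for standard (sub)gaussian matrices (or matrices satisfying the ordinary $\ell_2$‑RIP), so one must either check that its hypotheses are implied by the $\ell_1/\ell_2$‑$\operatorname{RIP}_{1,2}$ guarantee of Theorem~\ref{thm:mainNT}, or re‑run its concentration/chaining argument with that circulant $\operatorname{RIP}_{1,2}$‑bound in place of direct Gaussian concentration — conditioning on the RIP‑event and on $E_I=\{m/2\le|I|\le 3m/2\}$ as in the proof of Theorem~\ref{thm:mainNT}, handling jointly the three independent sources of randomness (the Gaussian vectors $g,h$, the selector $\theta$, and the dither $u$), and keeping the dependence on $R,\delta,\epsilon$ sharp enough to match the measurement count in \eqref{eqn:condUSC}.
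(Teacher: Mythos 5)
Your proposal follows essentially the same route as the paper: realize the Gaussian dither as an extra column to form $B=R_I[\Gamma_g \ h]$, obtain $\operatorname{RIP}_{1,2}^{\text{eff}}$ for $\frac{1}{m}\sqrt{\frac{\pi}{2}}B$ at a constant level via Theorem~\ref{thm:mainNT}, upgrade it via \cite[Proposition 1]{JaC16} (whose hypothesis is precisely such an $\ell_1/\ell_2$-embedding, so the re-derivation you worry about in your last paragraph is not needed) together with the Sudakov covering bound to a quantized embedding with additive error $c\del\epsilon$, and conclude from quantization consistency. The only deviations are cosmetic: the paper applies the quantized embedding directly to $[x,R]$ and $[x^{\#},R]$, whose difference has norm exactly $\|x-x^{\#}\|_2$, so your rescaling $\hat\delta=\delta/m$ and the detour through the normalized lifts and \eqref{eqn:normalization} are unnecessary (though not incorrect).
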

Note that the program \eqref{eqn:CPUSC} is convex. Indeed, the second condition in \eqref{eqn:CPUSC} is equivalent to 
$$\frac{\del}{2}\leq \Big(\sqrt{\frac{\pi}{2}} Az + \tau + u\Big)_i - Q_{\del}\Big(\sqrt{\frac{\pi}{2}} Ax + \tau + u\Big)_i <\frac{\del}{2}, \qquad i=1,\ldots,m.$$ 
\begin{proof}
Let $x^{\#}$ be any solution to \eqref{eqn:CPUSC}. Since $x$ is feasible for \eqref{eqn:CPUSC}, 
$$\|[x^\#,R]\|_1\leq\|[x,R]\|_1\leq \sqrt{s}\|x\|_2 + R\leq R(\sqrt{s} + 1).$$ 
Since $R\leq \|[x^{\#},R]\|_2,\|[x,R]\|_2\leq \sqrt{2}R$, it follows that 
$$[x^{\#},R],[x,R] \in \Si^{\operatorname{eff}}_{(\sqrt{s}+1)^2,N+1}\cap \sqrt{2}R \ B_{\ell_2^{N+1}}.$$ 
Moreover, the last condition in \eqref{eqn:CPUSC} precisely means that 
$$Q_{\del}\Big(\sqrt{\frac{\pi}{2}} B[x^{\#},R] + u\Big) = Q_{\del}\Big(\sqrt{\frac{\pi}{2}} B[x,R] + u\Big),$$
where $B=R_I[\Gamma_g \ h]$, where $h$ is an independent standard Gaussian that is independent of $g$ and $\theta$. We will now show that this implies the reconstruction error bound.\par 
Under \eqref{eqn:condUSC}, Theorem~\ref{thm:mainNT} implies that 
$$\frac{1}{2}\|z\|_2\leq \frac{1}{m}\sqrt{\frac{\pi}{2}} \|Bz\|_1\leq \frac{3}{2}\|z\|_2 \qquad \mbox{ for all } z\in \Si^{\operatorname{eff}}_{(\sqrt{s}+1)^2,N+1}$$ 
with probability at least $1-\eta$. By \cite[Proposition 1]{JaC16} we obtain that under this event, $\sqrt{\frac{\pi}{2}} B$ satisfies with probability at least $1-\eta$ a quantized version of RIP$_{1,2}$:
for some universal constant $c$ and any $z,z'\in \Si^{\operatorname{eff}}_{(\sqrt{s}+1)^2,N+1}\cap \sqrt{2}R \ B_{\ell_2^{N+1}}$ 
\begin{equation}
\label{eqn:QRIP}
\frac{1}{2}\|z-z'\|_2 - c\del\epsilon \leq \Big\|Q_{\del}\Big(\sqrt{\frac{\pi}{2}} Bz+u\Big) - Q_{\del}\Big(\sqrt{\frac{\pi}{2}} Bz'+u\Big)\Big\|_1 \leq \frac{1}{2}\|z-z'\|_2 + c\del\epsilon
\end{equation}
provided that 
\begin{equation}
\label{eqn:condUSCm}
m\gtrsim \epsilon^{-2} \cN(\Si^{\operatorname{eff}}_{(\sqrt{s}+1)^2,N+1}\cap \sqrt{2}R \ B_{\ell_2^{N+1}},\|\cdot\|_2,\del\epsilon^2) + \epsilon^{-2}\log(1/\eta),
\end{equation}
where $\cN(\cdot,\|\cdot\|_2,t)$ denotes the covering number with respect to the Euclidean norm. Now observe that
\begin{align*}
\Si^{\operatorname{eff}}_{(\sqrt{s}+1)^2,N+1}\cap \sqrt{2}R \ B_{\ell_2^{N+1}} & \subset \sqrt{2}R \{x\in \R^{N+1} \ : \ \|x\|_1\leq\sqrt{2s}, \ \|x\|_2\leq 1\} \\
& \subset 2\sqrt{2}R \operatorname{conv}(\Si_{2s,{N+1}}),
\end{align*}
where the final inclusion holds by \cite[Lemma 3.1]{PlV13lin}. Therefore, similarly to \eqref{eqn:appSud}, Sudakov's inequality (Theorem~\ref{thm:Sudakov}) implies that
$$ \log\cN(\Si^{\operatorname{eff}}_{(\sqrt{s}+1)^2,N+1}\cap \sqrt{2}R \ B_{\ell_2^{N+1}},\|\cdot\|_2,\del\epsilon^2) \lesssim R^2\del^{-2}\epsilon^{-4}s\log(eN/s)$$ 
and it follows that \eqref{eqn:condUSCm} is satisfied under our assumptions \eqref{eqn:condUSC}. We can now apply \eqref{eqn:QRIP} with $z=[x,1]$ and $z'=[x^{\#},1]$ and use $\|z-z'\|_2=\|x-x^{\#}\|_2$ to obtain the asserted error bound.    
\end{proof}

\appendix 

\section{Uniform scalar quantization and recovery via $\ell_\infty$-constrained $\ell_1$-minimization}
\label{app:scalar-quant}
 
Let $Q_\delta : \R^m (\del\Z + \del/2)^m$ be the 
uniform scalar quantizer of resolution $\delta$ defined in Section~\ref{sec:appl}.
Quantized measurements take the form $y = Q_\delta(A x)$. A reconstruction procedure delivering $x^\sharp$ 
should be quantization consistent, i.e., $Q_\delta(Ax^\sharp) = Q_\delta(A x)$.  
This motivates to consider an $\ell_1$-minimization problem of the following 
type for the reconstruction.
Let 
\[
B_\delta = \{z \in \R^m : - \delta/2 \leq z_i \leq \delta/2, i=1,\hdots,m\}
\]
and let $x^\sharp$ be the minimizer of 
\begin{equation}\label{basispursuit-inf}
\min \|z\|_1 \mbox{ subject to } Az - y \in B_\delta.
\end{equation}
Then $Q_\delta(A x^\sharp) = y = Q_\delta(A x)$. Note that taking the closure of 
$B_\delta$ instead of $B_\delta$ the above problem becomes an 
$\ell_\infty$-constrained $\ell_1$-minimization problem.

The following result concerning reconstruction via \eqref{basispursuit-inf} based on the standard RIP 
apparently has not been observed before (but see the discussion in \cite{DLR16} for non-optimal results based on $\ell_p$ versions of the RIP for $p\neq 2$).
\begin{theorem}\label{thm:RIP-linfty} Suppose that $A \in \R^{m \times N}$ is such that $\frac{1}{\sqrt{m}}A$ 
satisfies the $\ell_2$-RIP \eqref{def:RIP2} for 
$\delta_{2s} < 4/\sqrt{41} \approx 0.62$. Then for any $x \in \R^N$ and $y = Q_\delta(Ax)$ a solution $x^\sharp$ of \eqref{basispursuit-inf} is quantization consistent and satisfies
\begin{equation}\label{quant:err:bound}
\|x - x^\sharp\|_\infty \lesssim \delta + s^{-1/2} \inf_{w \in \R^N, \|w\|_0 \leq s} \|x-w\|_1.
\end{equation}
\end{theorem}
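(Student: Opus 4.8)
The plan is to reduce \eqref{quant:err:bound} to a standard $\ell_2$-RIP based stable-and-robust $\ell_1$-recovery estimate. The first step records feasibility and quantization consistency of $x^\sharp$. By the definition of $Q_\delta$ one has $(Ax)_i - y_i \in [-\delta/2,\delta/2)$ for every $i$, so $Ax-y\in B_\delta$ and $x$ is feasible for \eqref{basispursuit-inf}; hence a minimizer $x^\sharp$ exists and obeys $\|x^\sharp\|_1\le\|x\|_1$. Feasibility of $x^\sharp$ gives $Ax^\sharp-y\in B_\delta$, which — using on the boundary of the box the half-open convention built into $Q_\delta$ — is precisely the statement $Q_\delta(Ax^\sharp)=y=Q_\delta(Ax)$, i.e.\ quantization consistency.

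The second step extracts a data-fidelity bound on $h:=x-x^\sharp$. Since $Ax-y\in B_\delta$ and $Ax^\sharp-y\in B_\delta$, we get $Ah\in B_\delta-B_\delta=[-\delta,\delta]^m$, so $\|Ah\|_\infty\le\delta$, hence $\|Ah\|_2\le\sqrt{m}\,\delta$ and $\frac{1}{\sqrt{m}}\|Ah\|_2\le\delta$. Thus we are exactly in the situation $\|x^\sharp\|_1\le\|x\|_1$ and $\frac{1}{\sqrt{m}}\|A(x-x^\sharp)\|_2\le\delta$, with $\frac{1}{\sqrt{m}}A$ having restricted isometry constant $\delta_{2s}<4/\sqrt{41}$.

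The third step invokes the classical fact that $\delta_{2s}<4/\sqrt{41}$ implies the $\ell_2$-robust null space property of order $s$ for $\frac{1}{\sqrt{m}}A$, and hence the error estimate $\|x-z\|_2\lesssim s^{-1/2}\si_s(x)_1+\frac{1}{\sqrt{m}}\|A(x-z)\|_2$ for every $z$ with $\|z\|_1\le\|x\|_1$ (see \cite[Chapter~6]{fora13}; cf.\ \cite{cazh14}). Applying this with $z=x^\sharp$ yields $\|x-x^\sharp\|_2\lesssim s^{-1/2}\si_s(x)_1+\delta$, and \eqref{quant:err:bound} follows at once from $\|x-x^\sharp\|_\infty\le\|x-x^\sharp\|_2$.

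I do not expect a real obstacle here: the statement is essentially a repackaging of classical robust $\ell_1$-recovery. The two points that need a little attention are (i) the boundary convention in the quantization-consistency step — this is precisely why one might prefer to phrase \eqref{basispursuit-inf} with the closure of $B_\delta$ — and (ii) invoking the robust recovery theorem in the homogeneous form used above, with the explicit RIP threshold $4/\sqrt{41}$, which is exactly the value under which the $\ell_2$-robust null space property, and therefore the $s^{-1/2}\si_s(x)_1+\eta$ error bound, is known to hold.
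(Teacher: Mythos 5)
Your proof is correct and follows essentially the same route as the paper: the RIP bound $\delta_{2s}<4/\sqrt{41}$ yields the $\ell_2$-robust null space property, the box constraint gives $\tfrac{1}{\sqrt{m}}\|A(x-x^\sharp)\|_2\le\delta$ via $\|Av\|_2\le\sqrt{m}\,\|Av\|_\infty$, and the standard NSP-based stability estimate finishes; the paper merely routes this through the $\ell_\infty$-constrained problem \eqref{bp_infty} and \cite[Theorem 4.12]{fora13} rather than applying the estimate directly to the pair $(x,x^\sharp)$. One small caveat: your claim that a minimizer of \eqref{basispursuit-inf} must exist is neither needed nor justified when the box is taken half-open (the paper explicitly allows non-existence, in which case the statement is void), but since the theorem only concerns solutions when they exist, this does not affect correctness.
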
 
\begin{proof} The optimization problem \eqref{basispursuit-inf} is closely related to the $\ell_\infty$-constrained
$\ell_1$-minimization problem
\begin{equation}\label{bp_infty}
\min \|z\|_1 \mbox{ subject to } \|Az - y\|_\infty \leq \delta/2.
\end{equation}
In fact, either a minimizer of \eqref{basispursuit-inf} exists in which case it 
is also a minimizer of \eqref{bp_infty} or no minimizer of 
\eqref{basispursuit-inf} exists in which case the theorem is void. 
(A minimizer of \eqref{bp_infty} always exists so that it may be preferred in 
practice. The error bound \eqref{quant:err:bound} still holds for \eqref{bp_infty}, but every 
minimizer of \eqref{bp_infty} is quantization inconsistent in the case that no minimizer of \eqref{basispursuit-inf} exists.)

This close relation of \eqref{basispursuit-inf} and \eqref{bp_infty} suggests to study versions of the null space property, see e.g.\ \cite[Chapter 4]{fora13}. 
By Theorem \cite[Theorem 6.13]{fora13}, the bound on the restricted isometry constants of $\frac{1}{\sqrt{m}}$ 
implies the $\ell_2$-robust null space property in the form
\[
\|v_{S}\|_2 \leq \frac{\rho}{\sqrt{s}} \|v_{S^c}\|_1 + \tau \frac{1}{\sqrt{m}} \|A v\|_2 \quad \mbox{for all } v \in \R^N \mbox{ and all } S \subset [N], \#S =s,
\]
for constants $\rho \in (0,1)$ and $\tau > 0$ that only depend on $\delta_{2s}$.
Since $\|Av\|_2 \leq \sqrt{m} \|A v\|_\infty$ this yields
\[
\|v_{S}\|_2 \leq \frac{\rho}{\sqrt{s}} \|v_{S^c}\|_1 + \tau \|A v\|_\infty \quad \mbox{for all } v \in \R^N \mbox{ and all } S \subset [N], \#S =s,
\]
which is the $\ell_\infty$-robust null space property of order $s$. By \cite[Theorem 4.12]{fora13} this implies the error bound
for any minimizer $x^*$ of \eqref{bp_infty}
\[
\|x-x^*\|_2 \lesssim \frac{\inf_{w \in \R^N, \|w\|_0 \leq s} \|x-w\|_1}{\sqrt{s}} + \delta.
\]
This concludes the proof.
\end{proof}

By \cite{KMR14}, see also Section~\ref{sec:relatedwork}, a partial random circulant matrix $\frac{1}{\sqrt{|I|}} R_I \Gamma_g$ with subsampling on a fixed (deterministic) set $I \subset [N]$ generated by a standard Gaussian random vector 
satisfies $\delta_{2s} \leq 0.6$ with probability at least $1-\eta$ if
\[
|I| \gtrsim s (\log^2(s) \log^2(N) + \log(1/\eta)).
\]
Therefore, Theorem~\ref{thm:RIP-linfty} implies stable reconstruction from quantized measurements $y = Q_\delta(R_I \Gamma_g)$ via \eqref{basispursuit-inf} under this condition on the number of measurements.
 
\section{Upper RIP bound for partial random circulant matrices}

The proof of our main result only requires the upper $(\ell_2)$-RIP bound in \eqref{def:RIP2}. 
For (deterministic) subsampling of random circulant matrices we can deduce the following bound on $m$ from \cite{KMR14},
valid also for values of $\kappa \geq 1$.

\begin{theorem}\label{thm:RIP2upper} For a fixed (deterministic) subset $I 
\subset [N]$, let $A = R_I \Gamma_g$ be a draw of a random circulant matrix 
generated by a standard Gaussian vector $g$  and $\kappa > 0$.
If
\[
|I| \gtrsim \kappa^{-2} s (\log^2(s) \log(N) + \log(1/\eta))
\]
then $\sup_{x \in \Sigma_{s,N}} \frac{1}{\sqrt{m}} \| A x\|_2 \leq 1+\kappa$ with probability at least $1-\eta$.
\end{theorem}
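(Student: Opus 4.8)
The plan is to recognize $\sup_{x\in\Sigma_{s,N}}\frac1{|I|}\|Ax\|_2^2$ as the supremum of a Gaussian chaos process in $g$ and to feed it into the generalized chaos concentration inequality of Krahmer, Mendelson and Rauhut, essentially reproducing their proof of the (two-sided) restricted isometry property for partial random circulant matrices from \cite{KMR14}, but retaining only the upper tail and allowing the deviation to be of constant or larger order. Since circular convolution is commutative, $Ax=R_I\Gamma_g x=R_I\Gamma_x g$, so that with $m:=|I|$ and $D_I:=R_I^*R_I$ the diagonal projection onto the coordinates in $I$,
\[
\frac1m\|Ax\|_2^2=\frac1m\,g^\top\Gamma_x^*D_I\Gamma_x\,g.
\]
Each row of $\Gamma_x$ is a cyclic shift of $x$ of Euclidean norm $\|x\|_2=1$, hence $\E_g\frac1m\|Ax\|_2^2=\frac1m\|R_I\Gamma_x\|_F^2=1$ for every $x\in\Sigma_{s,N}$. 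It therefore suffices to bound $\sup_{x\in\Sigma_{s,N}}\big|\frac1m\|Ax\|_2^2-1\big|$, and since the conclusion only asks for $\sup_{x}\frac1{\sqrt m}\|Ax\|_2\le1+\kappa$, it is in fact enough to bound this deviation by $(1+\kappa)^2-1=\kappa(2+\kappa)$, a quantity of order $\kappa$ when $\kappa\le1$ and of order $\kappa^2$ when $\kappa\ge1$.

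I would then invoke the chaos tail bound of \cite{KMR14} for the matrix set $\mathcal A=\{\tfrac1{\sqrt m}R_I\Gamma_x:\ x\in\Sigma_{s,N}\}$: writing $\gamma_2=\gamma_2(\mathcal A,\|\cdot\|_{\ell_2\to\ell_2})$, $d_F=\sup_{B\in\mathcal A}\|B\|_F$ and $d_{2\to2}=\sup_{B\in\mathcal A}\|B\|_{\ell_2\to\ell_2}$, the supremum above is at most $c_1E+t$ with probability at least $1-2e^{-c_2\min\{t^2/V^2,\,t/U\}}$, for absolute constants $c_1,c_2>0$, where $E\lesssim\gamma_2^2+\gamma_2 d_F+d_F d_{2\to2}$, $V\lesssim d_{2\to2}(\gamma_2+d_F)$ and $U\lesssim d_{2\to2}^2$. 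Here $d_F=1$ by the row computation above, and $\|\Gamma_x\|_{\ell_2\to\ell_2}\le\|x\|_1\le\sqrt s$ by Young's inequality (cf.\ \eqref{Gamma:opnorm}) gives $d_{2\to2}\le\sqrt{s/m}$; the remaining, and crucial, ingredient is the chaining bound for $\gamma_2(\mathcal A,\|\cdot\|_{\ell_2\to\ell_2})$ for partial circulant matrices, which is exactly what \cite{KMR14} establishes. (A direct $\varepsilon$-net argument combined with the Hanson-Wright inequality \eqref{eqn:H-W} already gives an upper bound of this type, but with $s^2$ in place of $s\log^2 s$ in the condition on $m$, so the polylogarithmic-in-$s$ dependence genuinely requires the circulant chaining.)

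Finally I would substitute and optimize, taking $t\asymp\kappa(2+\kappa)$ and splitting it between the $E$-term and the exponent. The requirement $c_1E\lesssim\kappa(2+\kappa)$ and the tail requirement $\min\{t^2/V^2,\,t/U\}\gtrsim\log(1/\eta)$ each produce several conditions on $m$; using the \cite{KMR14} bound for $\gamma_2$ together with a short case distinction ($\kappa(2+\kappa)\asymp\kappa$ for $\kappa\le1$, so the linear term $\gamma_2$ binds, and $\kappa(2+\kappa)\asymp\kappa^2$ for $\kappa\ge1$, so the quadratic term $\gamma_2^2$ binds), one checks that in both regimes all conditions follow from the single requirement $m\gtrsim\kappa^{-2}s\big(\log^2(s)\log(N)+\log(1/\eta)\big)$. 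I expect the $\gamma_2$-estimate for $\mathcal A$ to be the only genuine obstacle, and it is imported wholesale from \cite{KMR14}; the sole new point is that restricting attention to an upper bound on $\|Ax\|_2$ and permitting $\kappa\ge1$ makes the quadratic term $\gamma_2^2$, rather than the linear term $\gamma_2$ that governs the small-$\delta$ two-sided RIP, control the number of measurements, which is what yields the stated $\kappa^{-2}$-scaling valid for all $\kappa>0$.
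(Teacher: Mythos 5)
Your proposal is correct and rests on the same pillars as the paper's own argument: both reduce the claim to the Krahmer--Mendelson--Rauhut machinery applied to the matrix set $\{\,|I|^{-1/2}R_I\Gamma_x : x\in\Sigma_{s,N}\,\}$, with the identical parameter estimates $d_F=1$, $d_{\ell_2\to\ell_2}\le\sqrt{s/|I|}$ and $\gamma_2\lesssim \sqrt{s/|I|}\,\log(s)\log(N)$ imported wholesale from \cite{KMR14}. The only real divergence is which concentration statement from \cite{KMR14} is invoked: the paper uses the moment bound for the norm itself, $\bigl(\E\sup_x\|V_xg\|_2^p\bigr)^{1/p}\lesssim \gamma_2+d_F+\sqrt{p}\,d_{\ell_2\to\ell_2}$ (Theorem 3.5(a) there), combined with the standard moments-to-tails conversion, which yields a subgaussian tail for $\sup_x\|Ax\|_2$ directly and needs no case distinction in $\kappa$; you instead apply the two-sided chaos deviation inequality to $\|Ax\|_2^2-\E\|Ax\|_2^2$ and convert via the threshold $\kappa(2+\kappa)=(1+\kappa)^2-1$, splitting the regimes $\kappa\le 1$ and $\kappa\ge 1$. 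Both routes produce the same $\kappa^{-2}$ scaling for all $\kappa>0$, and your squared-norm route even has the minor advantage that the leading constant $1$ stays exact (the implicit constants only multiply the deviation terms $\gamma_2^2+\gamma_2 d_F+d_Fd_{\ell_2\to\ell_2}$), which matches the form $1+\kappa$ of the conclusion more literally than the paper's $c(1+\cdots)+t$ tail. One remark: with the cited $\gamma_2$-estimate, both your argument and the paper's actually require $|I|\gtrsim \kappa^{-2}s\log^2(s)\log^2(N)$, i.e.\ two powers of $\log(N)$, which is what the main text uses in \eqref{m:cond-kappa}; the single $\log(N)$ in the statement of the theorem appears to be a slip of the paper rather than a defect of your proof.
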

\begin{proof} As argued in \cite{KMR14}, we can write $\frac{1}{\sqrt{|I|}} Ax = 
V_x g$ with $V_x = \frac{1}{\sqrt{|I|}} R_I \Gamma_x$.
Denote, $\mathcal{A}_{s,N} = \{V_x : x \in 
\Sigma_{s,N}\}$.
It follows then from \cite[Theorem 3.5(a)]{KMR14} that for every $p \geq 1$
\begin{equation}\label{moment:b}
\left(\E \sup_{x \in \Sigma_{s,N}} \|V_x  
g\|_2^p\right)^{1/p} \lesssim \gamma_2(\mathcal{A}_{s,N}, \|\cdot\|_{\ell_2 \to 
\ell_2}) + d_F(\mathcal{A}_{s,N}) + \sqrt{p} d_{\ell_2 \to 
\ell_2}(\mathcal{A}_{s,N}),
\end{equation}
where $\gamma_2(\mathcal{A}_{s,N}, \|\cdot\|_{\ell_2 \to \ell_2})$ denotes the 
$\gamma_2$-functional of the set $\mathcal{A}_{s,N}$ with respect to the 
spectral norm, $d_{\ell_2 \to \ell_2}$ and $d_{F}$ denote the diameter in the 
spectral and Frobenius norm, respecticely, of the set in the argument, see 
\cite{KMR14} for details. These parameters have been estimated in \cite[Section 
4]{KMR14},
\begin{align*}
d_F(\mathcal{A}_{s,N}) & = 1, \qquad d_{\ell_2 \to \ell_2}(\mathcal{A}_{s,N}) 
\leq \sqrt{s/|I|} \\
\gamma_2(\mathcal{A}_{s,N},\|\cdot\|_{\ell_2 \to \ell_2}) & \lesssim 
\sqrt{s/|I|} \log(s) \log(N).
\end{align*}
Moreover, the moment bound \eqref{moment:b} implies a tail bound (see e.g.\ \cite[Prop.\ 2.6]{KMR14}), so that
\[
\bP\left(\sup_{x \in \Sigma_{s,N}} \|A x\|_2 \geq c(1+ \sqrt{s/|I|} \log(s) 
\log(N)) + t\right) 
\leq e^{- c\frac{|I| t^2}{s}}.
\]
Requiring that the right hand is bounded by $\eta$ gives the statement of the theorem.
\end{proof}

\section{Some tools from probability}

The Sudakov inequality provides a bound of the covering numbers in terms of the Gaussian widths, see e.g.\ \cite[Theorem 3.18]{LeT91}.

\begin{theorem}[Sudakov]\label{thm:Sudakov} Let $T \subset \R^N$ and $\delta > 0$. Then the covering numbers with respect to the Euclidean norm obey
\[
\log N(T,\|\cdot\|_2,\delta) \lesssim \delta^{-2} \left( \E \sup_{x \in T} \langle x, g \rangle \right)^2,
\]
where $g$ is a standard Gaussian random vector in $\R^N$.
\end{theorem}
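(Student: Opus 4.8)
The plan is to prove this classical estimate (Sudakov minoration, rearranged) by reducing it to a packing bound and then invoking a Gaussian comparison inequality. First I would recall that covering numbers are dominated by packing numbers: writing $P(T,\delta)$ for the maximal cardinality of a $\delta$-separated subset of $T$, a maximal $\delta$-separated set is automatically a $\delta$-net, so $N(T,\|\cdot\|_2,\delta)\le P(T,\delta)$. It therefore suffices to show $\log P(T,\delta)\lesssim \delta^{-2}\bigl(\E\sup_{x\in T}\scalar{x}{g}\bigr)^2$, and we may assume $P(T,\delta)=M\ge 2$ (the case $M=1$ forces $N(T,\|\cdot\|_2,\delta)=1$ and the claim is trivial).

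Fix a $\delta$-separated set $\{x_1,\dots,x_M\}\subset T$ and consider the centered jointly Gaussian variables $G_i=\scalar{x_i}{g}$, $i=1,\dots,M$, whose increments satisfy $\E(G_i-G_j)^2=\|x_i-x_j\|_2^2\ge \delta^2$ for $i\ne j$. Let $h_1,\dots,h_M$ be independent $\mathcal N(0,\delta^2/2)$ random variables, so that $\E(h_i-h_j)^2=\delta^2\le \E(G_i-G_j)^2$ for all $i\ne j$. The key step is the Sudakov--Fernique comparison inequality, which — using only this comparison of increments — gives $\E\max_{i\le M}G_i\ge \E\max_{i\le M}h_i$. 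Since the $h_i$ are i.i.d.\ Gaussians, the standard lower bound for the expected maximum of i.i.d.\ centered normals yields $\E\max_{i\le M}h_i=\tfrac{\delta}{\sqrt 2}\,\E\max_{i\le M}\xi_i\gtrsim \delta\sqrt{\log M}$, where $\xi_i$ are i.i.d.\ standard normal (this follows from the anti-concentration estimate $\Pb(\xi_1\ge t)\gtrsim t^{-1}e^{-t^2/2}$ taken at $t\asymp\sqrt{\log M}$ together with independence). On the other hand, $\{x_i\}\subset T$ gives $\sup_{x\in T}\scalar{x}{g}\ge \max_{i\le M}G_i$ pointwise, hence $\E\sup_{x\in T}\scalar{x}{g}\ge \E\max_{i\le M}G_i\gtrsim\delta\sqrt{\log M}$. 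Rearranging yields $\log M\lesssim\delta^{-2}\bigl(\E\sup_{x\in T}\scalar{x}{g}\bigr)^2$, which is the desired bound.

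The substantive ingredient — and the step I expect to be the main obstacle — is the Sudakov--Fernique comparison inequality; the packing/covering comparison and the lower bound on $\E\max_i\xi_i$ are routine. If one does not wish to cite it from a standard reference such as \cite{LeT91}, one can prove it by Gaussian interpolation: for a smooth, convex, increasing approximation $F_\beta$ of the maximum (e.g.\ $F_\beta(u)=\beta^{-1}\log\sum_i e^{\beta u_i}$), differentiate $t\mapsto \E F_\beta\bigl(\sqrt t\,(G_1,\dots,G_M)+\sqrt{1-t}\,(h_1,\dots,h_M)\bigr)$ in $t\in[0,1]$, apply Gaussian integration by parts, and observe that the resulting expression has a sign determined entirely by the increment inequality $\E(h_i-h_j)^2\le\E(G_i-G_j)^2$; letting $\beta\to\infty$ recovers the claim for the maximum. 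With Sudakov--Fernique in hand the remainder of the argument is exactly as sketched above.
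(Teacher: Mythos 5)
Your argument is correct, and it is the standard proof of Sudakov's minoration; note that the paper itself does not prove this statement at all but simply cites it (as \cite[Theorem 3.18]{LeT91}), so there is no ``paper proof'' to compare against step by step. Your route --- bounding covering numbers by packing numbers, applying the Sudakov--Fernique comparison to the increments $\E(G_i-G_j)^2=\|x_i-x_j\|_2^2\geq \delta^2$ versus i.i.d.\ Gaussians with increment exactly $\delta^2$, and then using the lower bound $\E\max_{i\leq M}\xi_i\gtrsim \sqrt{\log M}$ --- is exactly the textbook derivation, and every step is sound. Two small points worth making explicit if you write this out in full: (i) in passing from the tail estimate $\Pb(\xi_1\geq t)\gtrsim t^{-1}e^{-t^2/2}$ to $\E\max_{i\leq M}\xi_i\gtrsim\sqrt{\log M}$ you should control the negative part of the maximum (e.g.\ via $(\max_i\xi_i)_-\leq(\xi_1)_-$ pointwise) and treat small $M\geq 2$ separately by monotonicity; (ii) the Sudakov--Fernique inequality is the genuinely nontrivial ingredient, and your interpolation sketch with $F_\beta(u)=\beta^{-1}\log\sum_i e^{\beta u_i}$ is the standard way to prove it, though in the context of this paper it would be entirely appropriate to cite it from \cite{LeT91} rather than reprove it, which is effectively what the authors do for the whole theorem.
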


\begin{lemma}\label{lem:basic_facts}
 Let $X$ denote a random variable  and let $\operatorname{med}(X)$ denote a 
median of $X$. Suppose that $\E(X-\E X)^2$ is finite and let $X'$ 
denote an independent copy of $X$. Then the 
following inequalities hold true:
 \begin{align*}
  \bP(|X - \operatorname{med}(X)| \geq \delta) &\leq 2 \bP(|X-X'| \geq \delta) \\
  |\E X - \ \operatorname{med}(X)| &\leq (\E(X-\E X)^2)^{\frac{1}{2}}
 \end{align*}
\end{lemma}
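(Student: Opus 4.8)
The plan is to treat the two inequalities separately, since each is a classical fact with a short self-contained proof. For the first inequality I would use a coupling argument exploiting the independent copy $X'$; for the second I would bound the mean–median gap through the mean absolute deviation.

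\textbf{First inequality.} Write $m = \operatorname{med}(X)$ and fix $\delta > 0$. Since $X'$ is an independent copy it has the same median, so $\bP(X' \leq m) \geq 1/2$ and $\bP(X' \geq m) \geq 1/2$. The key observation is the pair of inclusions
\[
\{X \geq m + \delta\} \cap \{X' \leq m\} \subseteq \{X - X' \geq \delta\}, \qquad \{X \leq m - \delta\} \cap \{X' \geq m\} \subseteq \{X - X' \leq -\delta\}.
\]
Taking probabilities and using independence together with the median bounds on $X'$ gives $\bP(X - X' \geq \delta) \geq \tfrac12 \bP(X \geq m + \delta)$ and $\bP(X - X' \leq -\delta) \geq \tfrac12 \bP(X \leq m - \delta)$. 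Since $\{X - X' \geq \delta\}$ and $\{X - X' \leq -\delta\}$ are disjoint, adding the two bounds and using that $\{|X - m| \geq \delta\} = \{X \geq m + \delta\} \cup \{X \leq m - \delta\}$ is a disjoint union yields $\bP(|X - X'| \geq \delta) \geq \tfrac12 \bP(|X - m| \geq \delta)$, which is the claim after rearranging.

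\textbf{Second inequality.} Set $\mu = \E X$. The plan is to establish the chain $|\mu - m| \leq \E|X - m| \leq \E|X - \mu| \leq (\E(X - \mu)^2)^{1/2}$. The first step is Jensen's inequality, $|\mu - m| = |\E(X - m)| \leq \E|X - m|$; the last step is Cauchy--Schwarz. The middle step is the only substantive point: the median minimizes $a \mapsto \E|X - a|$. I would prove this directly by comparing $\E|X - a|$ with $\E|X - m|$ at $a = \mu$. Writing $g(x) = |x - a| - |x - m|$ and treating first the case $a \geq m$, one checks the pointwise bounds $g(x) = a - m$ for $x \leq m$ and $g(x) \geq -(a - m)$ for $x > m$, whence $\E g(X) \geq (a - m)\big(\bP(X \leq m) - \bP(X > m)\big) \geq 0$ using the defining property $\bP(X \leq m) \geq 1/2$. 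The case $a < m$ is symmetric, invoking $\bP(X \geq m) \geq 1/2$ instead.

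\textbf{Main obstacle.} The delicate point is obtaining the sharp constant $1$ in the second inequality: a one-line Chebyshev estimate only delivers $|\mu - m| \leq \sqrt{2\,\Var X}$, so one genuinely needs the mean-absolute-deviation route above. The small case analysis verifying the pointwise bounds on $g$, and the bookkeeping ensuring that it is the median property of $X'$ (not merely of $X$) that drives the coupling in the first part, are the only places requiring care.
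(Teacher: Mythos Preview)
Your proof is correct. The first inequality is argued exactly as in the paper: both use the inclusions $\{X \geq m+\delta\}\cap\{X'\leq m\}\subset\{X-X'\geq\delta\}$ (and its mirror), independence, the median property of $X'$, and disjointness of the two tail events of $X-X'$.

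For the second inequality your route differs from the paper's. You go through the chain $|\mu-m|\leq \E|X-m|\leq \E|X-\mu|\leq (\Var X)^{1/2}$, invoking that the median minimises $a\mapsto \E|X-a|$ and then Cauchy--Schwarz. The paper instead applies Cantelli's one-sided inequality $\bP(X-\mu\geq t)\leq \sigma^2/(\sigma^2+t^2)$ at $t=\sigma$ to obtain $\bP(X\geq \mu+\sigma)\leq 1/2$ and $\bP(X\leq \mu-\sigma)\leq 1/2$, which forces any median into $[\mu-\sigma,\mu+\sigma]$. Both arguments deliver the sharp constant~$1$; your remark that a Chebyshev-type bound only yields $\sqrt{2}\,\sigma$ overlooks this one-sided refinement, which gives the paper a slightly shorter path. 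On the other hand, your approach is more self-contained (it does not quote Cantelli) and exposes the optimality of the median in $L^1$ as the underlying reason.
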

The estimates in Lemma \ref{lem:basic_facts} are well known. We provide a proof 
for convenience.
\begin{proof}
To show the first inequality, observe that since $X$ and $X'$ are identically
distributed the following holds for every $\delta>0$
 \begin{equation*}
  \bP(X' \leq \operatorname{med}(X)) \bP(X  \geq  \operatorname{med}(X) + 
\delta) \leq  \bP(X  \geq X' +  \delta)  \; .
 \end{equation*}
Since $\bP(X' \leq \operatorname{med}(X) )\geq \frac{1}{2}$ we conclude that
$
 \bP(X  \geq \operatorname{med}(X) +  \delta) \leq 2  \bP(X \geq X'+  \delta) 
$
implying the estimate 
\begin{equation*}
\bP(X -  \operatorname{med}(X) \geq  \delta) \leq 2  
\bP(X - X' \geq  \delta) \; .
\end{equation*}
Repeating the argument with $\bP(X-\operatorname{med}(X)  \leq - \delta)$ and $\bP(X'
\geq \operatorname{med}(X) )\geq \frac{1}{2}$ we arrive at the estimate
\begin{equation*}
 \bP(X -  \operatorname{med}(X) \leq -\delta) \leq 2  
\bP(X - X' \leq  -\delta) \; .
\end{equation*}
Since the events $\{X-X' \leq - \delta \}$, $\{X -  X' \geq  \delta\}$ 
and respectively the events $\{X-\operatorname{med}(X) \leq - \delta \}$, $\{X 
-  \operatorname{med}(X) \geq  \delta\}$ are disjoint, it follows that
\begin{align*}
 \bP(|X-\operatorname{med}(X)| \geq \delta) 
 &= \bP(X-\operatorname{med}(X) \leq - \delta ) + \bP(X-\operatorname{med}(X) 
\geq\delta ) \\
 &\leq 2\bP(X-X' \leq - \delta ) +2 \bP(X-X' \geq \delta ) \\
 &= 2\bP(|X- X'| \geq \delta) \; .
\end{align*}
To show the second inequality observe that with
$\sigma = (\E(X-\E X)^2)^{1/2}$ Cantelli's inequality yields
\begin{equation*}
\bP(X \geq \E X + 
\sigma) \leq \frac{\E(X-\E X)^2}{\sigma^2 + \sigma^2} = \frac{1}{2}  \; .
\end{equation*}
By using Cantelli's inequality with $\sigma$ replaced by $-\sigma$
we obtain the estimate
\begin{equation*}
\bP(X \geq \E X - 
\sigma) \geq 1 - \frac{\E(X-\E X)^2}{\sigma^2 + \sigma^2} = \frac{1}{2} \; .
\end{equation*}
Therefore $\bP(X < \E X - \sigma) \leq \frac{1}{2}$ and  $\bP(X \geq \E X + 
\sigma) \leq \frac{1}{2}$, which shows that every median of $X$ must
satisfy
\begin{equation*}
  \E X - \sigma \leq \operatorname{med}(X) \leq \E X + \sigma \; .
\end{equation*}
This implies the second inequality and completes the proof.
\end{proof}
The following probability bound related to symmetrization follows in the same way as in \cite[eq.\ (6.3)]{LeT91}.
\begin{lemma}\label{lem:symmetr}
Let $(X_t)$, $t \in T$, be a family of random variables, indexed by a finite or countable set $T$, and let
$(X_t')$ be an independent copy of $(X_t)$. Then, for $x,y >0$, 
\[
\bP(\sup_{t \in T}|X_t| \geq x + y) \leq \bP(\sup_{t \in T}|X_t - X_t'| \geq x) + \sup_{t \in T} \bP(|X_t| \geq y). 
\]
\end{lemma}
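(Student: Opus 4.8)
The plan is to run the classical symmetrization argument via a random selection index, exactly in the spirit of \cite[eq.\ (6.3)]{LeT91}. Since $T$ is finite or countable, fix an enumeration of $T$ and set $A=\{\sup_{t\in T}|X_t|\geq x+y\}$. On $A$, let $\tau$ denote the first index in this enumeration for which $|X_\tau|\geq x+y$, and on $A^c$ set $\tau=t_0$ for some fixed $t_0\in T$. Then $\tau$ is a random variable measurable with respect to $\sigma((X_t)_{t\in T})$, and $|X_\tau|\geq x+y$ on $A$ by construction.

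First I would split $A$ according to the size of $X_\tau'$. On $A\cap\{|X_\tau'|<y\}$ the (reverse) triangle inequality gives $|X_\tau-X_\tau'|\geq|X_\tau|-|X_\tau'|>(x+y)-y=x$, so that
\[
A\cap\{|X_\tau'|<y\}\subseteq\Big\{\sup_{t\in T}|X_t-X_t'|\geq x\Big\},
\]
and hence $\bP(A\cap\{|X_\tau'|<y\})\leq\bP(\sup_{t\in T}|X_t-X_t'|\geq x)$.

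For the complementary piece $A\cap\{|X_\tau'|\geq y\}$ I would condition on $(X_t)_{t\in T}$. Since $\tau$ is a function of $(X_t)_{t\in T}$ while $(X_t')_{t\in T}$ is independent of $(X_t)_{t\in T}$ and $X_t'$ has the same law as $X_t$ for every $t$, conditionally on $(X_t)_{t\in T}$ the index $\tau$ is deterministic and $X_\tau'$ has the law of $X_\tau$; therefore $\bP(|X_\tau'|\geq y\mid(X_t)_{t\in T})\leq\sup_{t\in T}\bP(|X_t|\geq y)$ almost surely. Taking expectations over $A$ gives $\bP(A\cap\{|X_\tau'|\geq y\})=\E\big[1_A\,\bP(|X_\tau'|\geq y\mid(X_t)_{t\in T})\big]\leq\sup_{t\in T}\bP(|X_t|\geq y)$. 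Adding the two bounds and using $\bP(A)=\bP(A\cap\{|X_\tau'|<y\})+\bP(A\cap\{|X_\tau'|\geq y\})$ yields the claim.

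The only genuinely delicate point is the measurability and conditioning bookkeeping in the last step: making precise that $\tau$ is a bona fide random variable (this is exactly where finiteness or countability of $T$ enters, since one needs a well-defined ``first'' index) and that, conditionally on $(X_t)_{t\in T}$, the variable $X_\tau'$ inherits the distribution of $X_\tau$. Everything else is just the triangle inequality and a union over the two cases $\{|X_\tau'|<y\}$ and $\{|X_\tau'|\geq y\}$.
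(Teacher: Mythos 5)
Your argument is correct and is precisely the classical symmetrization-by-an-independent-copy proof that the paper itself does not write out but simply attributes to Ledoux--Talagrand (eq.\ (6.3)): introducing the selection index $\tau$, splitting on $\{|X_\tau'|<y\}$ versus $\{|X_\tau'|\geq y\}$, and conditioning on $(X_t)_{t\in T}$ is exactly the intended route. One small repair: on the part of the event $\{\sup_{t\in T}|X_t|\geq x+y\}$ where the supremum equals $x+y$ but is not attained, your ``first index'' $\tau$ does not exist; instead take $\tau$ to be the first $t$ with $|X_t|>x+y-\eps$, which gives the same two bounds with $\bP(\sup_{t\in T}|X_t-X_t'|\geq x-\eps)$ in place of the first term, and then let $\eps\downarrow 0$ using continuity from above (equivalently, exhaust $T$ by finite subsets, where the supremum is attained).
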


\section*{Acknowledgements}
The first-named author would like to thank Laurent Jacques, Christian K\"ummerle, and Rayan Saab for stimulating discussions. All authors acknowledge funding from the DFG through the project Quantized Compressive Spectrum Sensing (QuaCoSS) which is part of the Priority Program SPP 1798 Compressive Sensing in Information Processing (COSIP). HJ and HR acknowledge funding by the German Israel Foundation (GIF) through the project
Analysis of Structured Random Matrices in Recovery Problems (G-1266).


\end{document}